\documentclass{article}

\usepackage{amsfonts,amsmath,amssymb,amsthm}
\usepackage{epic,eepic}

\usepackage{graphicx}



\usepackage{fullpage}

\newcommand{\vvec}{\mathbf{v}}
\newcommand{\Avec}{\mathbf{A}}
\newcommand{\Bvec}{\mathbf{B}}
\newcommand{\Gvec}{\mathbf{G}}
\newcommand{\Hvec}{\mathbf{H}}

\newcommand{\Ivec}{\mathbf{I}}
\newcommand{\Jvec}{\mathbf{J}}
\newcommand{\Kvec}{\mathbf{K}}

\newcommand{\uvec}{\mathbf{u}}
\newcommand{\xvec}{\mathbf{x}}

\newcommand{\Zerovec}{\mathbf{0}}
\newcommand{\Onevec}{\mathbf{1}}

\newcommand{\Lamvec}{\mathbf{\Lambda}}

\newcommand{\sech}{\mathrm{sech} \,}



\newcommand {\R}{\mathbb{R}}

\DeclareMathOperator{\Fix}{Fix}


\newtheorem{Theorem}{Theorem}
\newtheorem{Lemma}{Lemma}

\usepackage{color}

\begin{document}

\title{Symmetries constrain dynamics in a family of balanced neural networks}
\author{A.K. Barreiro\footnotemark[1] \footnotemark[4] \and J.N. Kutz\footnotemark[2] \and E. Shlizerman\footnotemark[2] \footnotemark[3] \footnotemark[5]}


\maketitle

\renewcommand{\thefootnote}{\fnsymbol{footnote}}
\footnotetext[1]{Department of Mathematics, Southern Methodist University}
\footnotetext[2]{Department of Applied Mathematics, University of Washington} 
\footnotetext[3]{Department of Electrical Engineering, University of Washington} 
\footnotetext[4]{Supported by a Mathematical Biosciences Institute Early Career Award}
\footnotetext[5]{Supported by NSF/NIGMS DMS-1361145 and Washington Research Foundation Fund for Innovation in Data-Intensive Discovery}
\renewcommand{\thefootnote}{\arabic{footnote}}

\begin{abstract}
We examine a family of random firing-rate neural networks in which we enforce the neurobiological constraint of Dale's Law --- each neuron makes either excitatory or inhibitory connections onto its post-synaptic targets. We find that this constrained system may be described as a perturbation from a system with non-trivial symmetries. We analyze the symmetric system using the tools of equivariant bifurcation theory, and demonstrate that the symmetry-implied structures remain evident in the perturbed system.  In comparison, spectral characteristics of the network coupling matrix are relatively uninformative about the behavior of the constrained system. 
\end{abstract}

\noindent
\textbf{Key words:} recurrent networks, random network, bifurcations, equivariant, symmetry\\

\noindent
\textbf{AMS subject classifications:} 15B52, 34C14, 34C23, 37G40, 92B20

\section{Introduction}
Networked dynamical systems are of growing importance across the physical, engineering, biological and social sciences.  
Indeed, understanding how network connectivity drives network functionality is critical for understanding a broad range of modern-day systems including  the power grid, communications networks, the nervous system and social networking sites.  All of these systems are characterized by a large and complex graph connecting many individual units, or nodes, each with its own input--output dynamics.  In addition to the node dynamics, how such a system operates as a whole 
will depend on the structure of its connectivity graph~\cite{Watts:1998db,Park2013science,Hu2016}, but the connectivity is often so complicated that this structure-function problem is difficult to solve.  

Regardless, a ubiquitous observation across the sciences is that meaningful input/output of signals in high-dimensional networks are often encoded in low-dimensional patterns of dynamic activity. This suggests that a central role of the network structure is to produce low-dimensional representations of meaningful activity.  Furthermore, since connectivity also drives the underlying bifurcation structure of the network-scale activity, and because both this activity and the relevant features of the connectivity graph are low-dimensional, such networks may admit a tractable structure-function relationship. Interestingly, the presence of low-dimensional structure may run counter to the intuition provided by the insights of random network theory, which has otherwise proven to be a valuable tool in analyzing large networks. 

In considering an excitatory-inhibitory network inspired by neuroscience, we find a novel family of periodic solutions that restrict dynamics to a low-dimensional attractor within a high-dimensional phase space.  These solutions arise as a consequence of an underlying symmetry in the mean connectivity structure, and can be predicted and analyzed using equivariant bifurcation theory.  We then show that low-dimensional models of the high-dimensional network, which are more tractable for computational bifurcation studies, preserve all the key features of the bifurcation structure. Finally, we demonstrate that these dynamics differ strikingly from the predictions made by random network theory, in a similar setting.


Random network theory --- in which one seeks to draw conclusions about an ensemble of randomly-chosen networks, rather than a specific instance of a network --- is particularly relevant to neural networks because such networks are large, under-specified (most connections cannot be measured), and heterogenous (connections are variable both within, and between, organisms).  It is particularly tempting to apply the tools of random matrix theory to the connectivity graph, as the spectra of certain classes of random matrices display universal behavior as network size $N \rightarrow \infty$ \cite{tao_etal_2010}. 
The seminal work of Sompolinsky et al. \cite{SompCris88} analyzes a family of single-population firing-rate networks in which connections are chosen from a mean zero Gaussian distribution: in the limit of large network size ($N \rightarrow \infty$), they find that the network transitions from quiescence to chaos as a global coupling parameter passes a bifurcation value $g^{\ast} = 1$. This value coincides with the point at which the spectrum of the random connectivity matrix exits the unit circle \cite{girko85,SommCrisSompStein88,bai97}, thereby connecting linear stability theory with the full nonlinear  dynamics. 

Developing similar results for structured, multi-population networks has proven more challenging. One natural constraint to introduce is that of \textit{Dale's Law}: that each neuron makes either excitatory or inhibitory connections onto its post-synaptic targets. For a neural network, this constraint is manifested in a synaptic weight matrix with single-signed columns.  If weights are tuned so that incoming excitatory and inhibitory currents approximately cancel (i.e. $\sum_j \Gvec_{ij} \approx 0$), such a network may be called \textit{balanced} (we note that our use of the word ``balanced" is distinct from the dynamic balance that arises in random networks when excitatory and inhibitory synaptic currents approximately cancel, as studied by \cite{vvSomp1996,renart10} and others). Rajan and Abbott \cite{RA06} studied balanced rank 1 perturbations of Gaussian matrices and found that, remarkably, the spectrum is unchanged. More recent papers have addressed the spectra of more general low-rank perturbations \cite{Wei12,tao2013,muir_MF_2015}, general deterministic perturbations \cite{Ahmadian_etal_2015}, and block-structured matrices \cite{Aljadeff_2015}.  

However, the relationship between linear/spectral and nonlinear dynamics appears to be more complicated than in the unstructured case. Aljadeff et al. \cite{Aljadeff_2015} indeed find that the spectral radius is a good predictor of qualitative dynamics and learning capacity in networks with block-structured variances. Others have studied the large network limit, but when mean connectivity scales like $1/N$ (smaller than the standard deviation $1/\sqrt{N}$): therefore as $N \rightarrow \infty$, the columns cease to be single-signed \cite{hermann_etal_2012,cabana_touboul_2013,kadmon_HS_2015}. In a recent paper which studies a balanced network with mean connectivity $1/\sqrt{N}$, the authors find a slow noise-induced synchronized oscillation that emerges when a special condition (perfect balance) is imposed on the connectivity matrix \cite{delMolino_etal_PRE_2013}.  As a growing body of work has continued to connect qualitative features of nonlinear dynamics and learning capacity \cite{SusAbb09,RAS10,ostojic2014}, it is crucial to continue to further develop our understanding of how complex nonlinear dynamics emerges in structured, heterogenous networks.

In this paper, we study a family of excitatory-inhibitory networks in which both the mean and variability of connection strengths scale like $1/\sqrt{N}$. In a small, but crucial difference from other recent work \cite{RA06,delMolino_etal_PRE_2013}, we reduce self-coupling. We will show that with this change, these networks exhibit a (heretofore unreported) family of periodic solutions.  These solutions arise as a consequence of an underlying symmetry in the mean connectivity structure, and can be predicted and  analyzed using equivariant bifurcation theory. We show through concrete examples that these periodic orbits can persist in heterogeneous networks, even for large perturbations.  Moreover, we demonstrate that low-dimensional models (reduced order models) can be generated to characterize the high-dimensional system and its underlying bifurcation structure; we use the reduced model to study these oscillations as a function of system size $N$. Thus the work suggests both how biophysically relevant symmetries may play a crucial role in the observable dynamics, and also how reduced-order models can be constructed to more easily study the underlying dynamics and bifurcations.




\section{Mathematical Model}
We consider a network in which each node represents the firing rate of a single neuron, connected by sigmoidal activation functions through a random weight matrix. 
This is the model studied in Refs. ~\cite{SompCris88,RA06,delMolino_etal_PRE_2013}, with some important modifications which we detail below.
%
Specifically, we analyze the family of random networks:
\begin{eqnarray}
\dot{\xvec} & = & -\xvec + \Gvec  \tanh \left(g \, \xvec \right)  \label{eqn:Gsys}
\end{eqnarray}
where 
\begin{eqnarray}
\sqrt{N} \Gvec & = & \Hvec + \epsilon \Avec.         \label{eqn:G_def}   
\end{eqnarray}
$\Hvec$ is an $N \times N$ matrix s.t.
\begin{eqnarray}
\Hvec_{ij} & = &  \left\{  \begin{array}{l l} \mu_E, & \qquad j \leq n_E, j \not= i\\
b_E \mu_E, & \qquad  j \leq n_E, j = i\\
\mu_I, & \qquad n_E < j \leq N, j \not= i\\
b_I \mu_I, & \qquad n_E < j \leq N, j = i  \end{array}   \right.
\end{eqnarray}
and 
\begin{eqnarray}
\Avec_{ij} & \sim &  \left\{  \begin{array}{l l} N(0,\sigma_E), & \qquad j \leq n_E, j \not= i\\
N(0,\sigma_I), & \qquad n_E < j \leq N, j \not= i\\
 0, & \qquad j = i  \end{array}   \right.
\end{eqnarray}
We will use the parameter $f$ to identify the fraction of neurons that are excitatory; i.e. $f = n_E/N$. 
The parameter $\alpha$ characterizes the ratio of inhibitory-to-excitatory synaptic strengths: $\mu_I = -\alpha \mu_E$.
We refer to the network as \textit{balanced} (the mean connectivity into any cell is zero) if $\alpha = \frac{f}{1-f}$; it is inhibition-dominated if $\alpha >  \frac{f}{1-f}$.
In all cases below, $f=0.8$ reflecting the approximately 80 \% / 20 \% ratio observed in cortex; the corresponding value of $\alpha$ for a balanced network is $\alpha = 4$. 
Finally, we choose $\sigma_E, \sigma_I$ so that the variance of excitatory and inhibitory connections into each cell is equal; i.e. $\sigma_E^2 f = \sigma_I^2 (1-f)$. 

The matrix $\Hvec$ has constant columns, except for the diagonal, which reflects self-coupling from each cell onto itself. The parameters $b_E$ and $b_I$ give the ratio of self- to non-self connection strengths, for excitatory and inhibitory cells respectively. We will assume that the effect of self-coupling is to reduce connection strengths; that is, $0 \le b_E, b_I \le 1$.

We note that as in \cite{SompCris88} --- but in contrast to later work \cite{RA06,delMolino_etal_PRE_2013} --- 
self-interactions can differ from interactions with other neurons: i.e. $\Gvec_{jj} \not= \Gvec_{ij}$. This is a reasonable assumption, if we conceptualize each firing rate unit $x_j$ as corresponding to an individual neuron; while neurons can have self-synapses (or \textit{autapses} \cite{ConLees_JPhys_2010}), refractory dynamics would tend to suppress self-coupling from influencing the firing rate.

We will find that many features of the resulting dynamics may be connected to an approximate symmetry of the system.
Specifically, if you remove the ``noise" from the connectivity matrix $\Gvec$  --- so that
$\Gvec_{ij} = \mu_E/\sqrt{N}$ if $j \le n_E, i \not= j$, and $\Gvec_{ij} = \mu_I/\sqrt{N}$ if $j > n_E, i \not = j$ --- then the subspace in which all $E$ neurons have the same activity, and all $I$ neurons have the same activity;
\begin{eqnarray*}
x_j & = & x_E, \qquad j \le n_E\\
x_j & = & x_I, \qquad j > n_E
\end{eqnarray*}
is invariant under the dynamics $\dot{\xvec} = -\xvec + \Gvec \tanh(g \xvec)$. To be precise, the system of equations is \textit{equivariant} under the group of permutation symmetries $(S_{n_E} \oplus S_{n_I})$, which contains any permutation of the $n_E$ excitatory neurons and any permutation of the $n_I$ inhibitory neurons.

We will begin by considering the ``noise-less" system in \eqref{eqn:Gsys}, where $\sqrt{N} \Gvec = \Hvec$. The solutions that arise in this system can be readily identified because of the underlying symmetries of the network. We will find that these solutions actually do arise in numerical simulations: furthermore, they persist even when the symmetry is perturbed ($\sqrt{N} \Gvec = \Hvec + \epsilon \Avec$).

\subsection{Some preliminary analysis: spectrum of $\Hvec$}
To analyze stability and detect bifurcations, we will frequently make reference to the Jacobian of \eqref{eqn:Gsys}, \eqref{eqn:G_def}; when $\epsilon = 0$, we will find that this always takes on a column-structured form.   We begin by summarizing some facts about the spectra of these matrices.

Let $\Kvec_N$ be the matrix of all ones except for on the diagonal; i.e.
\begin{eqnarray}
\Kvec_N & = & \mathbf{1_N} \mathbf{1_N}^T - \Ivec_N
\end{eqnarray}

\begin{Lemma}
$\Kvec_N$ has the following eigenvalues: $\lambda_0 = N-1$, and $\lambda_j = -1$ with geometric and algebraic multiplicity $N-1$.\\
\end{Lemma}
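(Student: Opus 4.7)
The plan is to exploit the rank-one-plus-scalar structure $\Kvec_N = \mathbf{1}_N \mathbf{1}_N^T - \Ivec_N$. Since $\mathbf{1}_N \mathbf{1}_N^T$ and $\Ivec_N$ commute (the identity commutes with everything), they are simultaneously diagonalizable, and the spectrum of $\Kvec_N$ is obtained by shifting the spectrum of $\mathbf{1}_N \mathbf{1}_N^T$ by $-1$.

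First I would analyze the rank-one matrix $\Jvec_N := \mathbf{1}_N \mathbf{1}_N^T$. Applied to $\mathbf{1}_N$, it gives $\Jvec_N \mathbf{1}_N = N \mathbf{1}_N$, so $N$ is an eigenvalue with eigenvector $\mathbf{1}_N$. For any vector $\vvec \in \R^N$ with $\mathbf{1}_N^T \vvec = 0$, we have $\Jvec_N \vvec = \mathbf{1}_N (\mathbf{1}_N^T \vvec) = \Zerovec$, so $0$ is an eigenvalue and the hyperplane $\mathbf{1}_N^{\perp}$ (of dimension $N-1$) lies in the kernel. This accounts for all $N$ eigenvalues of $\Jvec_N$.

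Next I would use the identity
\begin{equation*}
\Kvec_N \vvec = \Jvec_N \vvec - \vvec = (\lambda - 1) \vvec
\end{equation*}
for any eigenvector $\vvec$ of $\Jvec_N$ with eigenvalue $\lambda$. This immediately converts the eigenpairs $(N, \mathbf{1}_N)$ and $(0, \vvec)$ for $\vvec \in \mathbf{1}_N^{\perp}$ into eigenpairs $(N-1, \mathbf{1}_N)$ and $(-1, \vvec)$ of $\Kvec_N$, yielding the claimed eigenvalues with the stated geometric multiplicities ($1$ and $N-1$ respectively).

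Finally, I would note that $\Kvec_N$ is real symmetric, so it is orthogonally diagonalizable; hence its algebraic and geometric multiplicities coincide, giving algebraic multiplicity $N-1$ for the eigenvalue $-1$. There is no real obstacle here — the whole argument is a direct spectral decomposition — but the one thing worth being careful about is the simultaneous diagonalization step, which must be justified either via commutativity or by explicitly verifying eigenvectors, as above.
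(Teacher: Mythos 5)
Your proposal is correct and rests on the same essential idea as the paper's proof: the eigenvector $\mathbf{1}_N$ carries the eigenvalue $N-1$, and the orthogonal hyperplane $\mathbf{1}_N^{\perp}$ is the eigenspace for $-1$. The paper verifies this by writing the eigenvalue equation componentwise and invoking orthogonality to $\mathbf{1}_N$, whereas you package the same computation as a rank-one decomposition plus a spectral shift (and add the observation that symmetry forces algebraic multiplicity to equal geometric multiplicity, which the paper leaves implicit); both are sound.
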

\begin{proof}
By inspection, $\mathbf{1}$ is an eigenvector with corresponding eigenvalue $N-1$ (as each row sums to $N-1$). The remaining eigenvectors must satisfy
\begin{eqnarray*}
\sum_{j \not = i} v_j & = & \lambda v_i \rightarrow\\
\sum_{j} v_j & = & \lambda v_i + v_i\\
& = & (\lambda + 1)v_i = 0,
\end{eqnarray*}
since each such eigenvector is orthogonal to $\mathbf{1}$.
\end{proof} 

The Jacobian of \eqref{eqn:Gsys} has the following special structure: except for its diagonal, the entries in column $j$ depend only on the $j$-th coordinate (and are all equal). This leads to a simplification of the spectrum when the cells are divided into synchronized populations. To be precise, we can make the following statement.\\
%
\begin{Lemma}
Assume we can divide our cells $j=1, \cdots, N$ into $K+1$ populations, where $I_k$ identifies the index set of population $k$, for $k = 0, \cdots, K$. Let $\Jvec$ be $-\Ivec + \Kvec_N \Lamvec + \Bvec$, where $\Lamvec$ and $\Bvec$ are diagonal matrices with 
\begin{eqnarray}
\Lamvec_{jj} & = &   a_k,  \qquad j \in I_k\\
\Bvec_{ij} & = & b_k,  \qquad j \in I_k
\end{eqnarray}
That is, $\Jvec$ has constant columns (except for the diagonal), with the value in each column determined by the population identity.
Then the eigenvalues of $\Jvec$ are:
\begin{enumerate}
\item For each $k=0,...,K$: $-1-a_k + b_k $, with multiplicity $n_{I_k}-1$
\item The $K+1$ remaining eigenvalues coincide with the eigenvalues of the matrix $\tilde{\Jvec}$:
\begin{eqnarray}
\tilde{\Jvec}_{ij} & = & \left\{  \begin{array}{l l} n_{I_j} a_j, & \qquad  j \not=i\\
-1+(n_{I_j}-1) a_j + b_j, & \qquad  j =i  \end{array}   \right.
\end{eqnarray}
where $n_{I_j}$ is the number of cells in population $j$. We note that the size of $\tilde{\Jvec}$ is set by the number of subpopulations; that is,  $\tilde{\Jvec} \in \mathbb{R}^{(K+1) \times (K+1)}$.
\end{enumerate}
\end{Lemma}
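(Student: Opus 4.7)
The plan is to exhibit an invariant-subspace decomposition $\R^N = V_1 \oplus V_2$ under $\Jvec$. Let $V_1$ be the span of vectors that are supported on a single population $I_k$ and sum to zero on that population, and let $V_2$ be the span of the $K+1$ population-indicator vectors (so that $V_2$ consists of vectors constant on each $I_k$). These subspaces are complementary: any block-constant vector in $V_1$ must vanish block-by-block, and dimensions add to $\sum_k (n_{I_k}-1) + (K+1) = N$. On $V_1$ I will show $\Jvec$ acts as a scalar on each block, producing the eigenvalues in part (1); on $V_2$ the restriction of $\Jvec$ has matrix representation $\tilde{\Jvec}$ in the natural basis, yielding part (2).

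For the $V_1$ calculation, fix $k$, let $\vvec$ be supported on $I_k$ with $\sum_{j \in I_k} v_j = 0$, and compute $(\Jvec \vvec)_i = \sum_{j \in I_k} \Jvec_{ij} v_j$ using the column structure. If $i \notin I_k$, every $\Jvec_{ij}$ with $j \in I_k$ equals $a_k$, so $(\Jvec \vvec)_i = a_k \sum_j v_j = 0$. If $i \in I_k$, the $j=i$ term contributes $(-1+b_k)v_i$ rather than $a_k v_i$, so $(\Jvec \vvec)_i = a_k \sum_j v_j + (-1 + b_k - a_k) v_i = (-1 - a_k + b_k)\, v_i$. Hence every such $\vvec$ is an eigenvector with eigenvalue $-1 - a_k + b_k$, and the subspace of zero-sum vectors supported on $I_k$ has dimension $n_{I_k} - 1$, matching the multiplicities in part (1).

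For $V_2$, let $\vvec$ be block-constant with $v_j = c_k$ on $I_k$. For $i \in I_l$, splitting the sum by population and correcting the $j=i$ term gives
\begin{equation*}
(\Jvec \vvec)_i \;=\; \sum_{k \neq l} n_{I_k} a_k c_k \;+\; (n_{I_l} - 1)\, a_l c_l \;+\; (-1 + b_l)\, c_l,
\end{equation*}
which depends only on $l$. Thus $V_2$ is $\Jvec$-invariant, and in the coordinates $(c_0, \ldots, c_K)$ the restriction of $\Jvec$ to $V_2$ is precisely the matrix $\tilde{\Jvec}$ written in the lemma. Its $K+1$ eigenvalues, together with the $N - (K+1)$ eigenvalues from $V_1$, account for all $N$ eigenvalues of $\Jvec$.

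The only real subtlety is the diagonal-versus-off-diagonal bookkeeping, since the column structure of $\Jvec$ has $a_k$ off the diagonal but $-1 + b_k$ on the diagonal; the key is to organize each computation so that one first writes the sum as if every entry in column $j$ were $a_k$ and then applies a single diagonal correction $(-1 + b_l - a_l)\, v_i$ to the $j=i$ term. Once that pattern is established, both parts of the lemma drop out of a one-line computation.
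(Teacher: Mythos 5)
Your proof is correct and follows essentially the same route as the paper: the paper simply exhibits the same two eigenvector families (zero-sum vectors supported on a single $I_k$, and block-constant vectors whose coefficient vector is an eigenvector of $\tilde{\Jvec}$) and asserts the verification is a direct computation, which you carry out explicitly. The only addition on your side is the explicit check that the two subspaces are complementary and that the dimensions sum to $N$, which the paper leaves implicit.
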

\begin{proof}
This can be checked by direct computation:
\begin{enumerate}
\item For $k=0,...,K$: there are $n_{I_k}-1$ linearly independent eigenvectors given by vectors that (a) have support only on $I_k$ and that (b) sum to zero: i.e. $\vvec^k_j = 0$ if $j \notin I_k$; and $\vvec^k \perp \Onevec$. 
\item The remaining eigenvectors are given by vectors that are constant and non-zero on each index set: $\vvec_j = c_k$ if $j \in I_k$, and $\left[ \begin{array}{llll} c_0 & c_1 & \cdots & c_K \end{array} \right]$ is an eigenvector of $\tilde{\Jvec}$.
\end{enumerate}
\end{proof}

\noindent 
We consider specific examples that are of particular importance: \\

\noindent
\textbf{Example 1}: 
We consider a balanced network with (possibly) reduced self-coupling: $\alpha = \frac{n_I}{n_E}$ and $0 \le b_E, b_I \le 1$.
The origin $\xvec = \Zerovec$ is a fixed point of \eqref{eqn:Gsys} for all $g$. Therefore, we can think of the population as consisting of two synchronized populations, excitatory and inhibitory: i.e. $n_0 = n_E$ and $n_1 = n_{I}$; $a_0 = \frac{g \mu_E}{\sqrt{N}}$, $a_1 = \frac{-\alpha g \mu_E}{\sqrt{N}}$, $b_0 = b_E a_0$, and $b_1 = b_I a_1$. Then the Jacobian has eigenvalues
\begin{enumerate}
\item $\lambda_E = -1 - \frac{g \mu_E}{\sqrt{N}}(1-b_E)$, with multiplicity $n_E - 1$;
\item $ \lambda_{I} = -1 + \frac{g \alpha \mu_E}{\sqrt{N}}(1-b_I)$, with multiplicity $n_{I} - 1$;
\item 2 remaining eigenvalues given by the $2 \times 2$ matrix $\tilde{\Jvec}$:
\begin{eqnarray}
\tilde{\Jvec} & = & -\Ivec + \frac{g\mu_E}{\sqrt{N}} \left[  \begin{matrix}
n_E - (1-b_E) & -n_E\\
n_E & -n_E + \alpha(1-b_I)
\end{matrix} \right]   \label{eqn:Eproblem}
\end{eqnarray}
This will be a complex pair as long as $n_E > \left( \alpha(1-b_I) + 1-b_E \right)/4$, so $\lambda_{1,2} = \lambda \pm i \omega$ where
\begin{eqnarray*}
\lambda & = & -1 + \frac{g \mu_E}{\sqrt{N}} \frac{\alpha(1-b_I) - 1+b_E}{2}\\
\omega & = & \frac{g \mu_E}{\sqrt{N}} \sqrt{\alpha (1-b_I) + 1-b_E} \sqrt{n_E - \frac{\alpha (1-b_I) + 1-b_E}{4}}
\end{eqnarray*}
We note that $\lambda_E <  \lambda \equiv \Re(\lambda_{1,2}) < \lambda_I$. The eigenvalue associated with the excitatory population, $\lambda_E < 0$ for any value of $g$. 
\end{enumerate}
The corresponding eigenvectors are:
\begin{enumerate}
\item 
$\vvec_E  =  {\rm span} \, \{ \left[  
\vvec_{n_E} \;
\underbrace{\begin{matrix}0 & \cdots & 0\end{matrix}}_{n_{I}} \right] \}, \qquad \vvec_{n_E} \perp \Onevec_{n_E}$;
\item 
$\vvec_{I_1}  =  {\rm span} \, \{ \left[  
\underbrace{\begin{matrix}0 & \cdots & 0\end{matrix}}_{n_{E}} \;
\vvec_{n_{I_1}} \right] \}, \qquad \vvec_{n_{I}} \perp \Onevec_{n_{I}}$;
\item 
$\vvec_{\tilde{J}} =  {\rm span} \, \{ \left[  
\underbrace{\begin{matrix}c_E & \cdots & c_E\end{matrix}}_{n_E} \;
\underbrace{\begin{matrix}c_{I} & \cdots & c_{I}\end{matrix}}_{n_{I}} \right] \}$
\end{enumerate}

We pause to consider two special cases of Example 1. The first is \textit{no} self-coupling --- $b_E, b_I = 0$ --- which we will examine in detail in the rest of this paper. The second is full self-coupling --- $b_E, b_I = 1$ --- which has been studied previously by many authors \cite{RA06,delMolino_etal_PRE_2013,kadmon_HS_2015}. \\

\noindent
\textbf{Example 1.1}:
We consider Example 1, but with no self-coupling: $b_E, b_I = 0$.  Then at the origin $\xvec = 0$, the Jacobian has eigenvalues
\begin{enumerate}
\item $\lambda_E = -1 - \frac{g \mu_E}{\sqrt{N}}$, with multiplicity $n_E - 1$;
\item $ \lambda_{I} = -1 + \frac{g \alpha \mu_E}{\sqrt{N}}$, with multiplicity $n_{I} - 1$;
\item 2 remaining eigenvalues given by the $2 \times 2$ matrix $\tilde{\Jvec}$:
\begin{eqnarray}
\tilde{\Jvec} & = & -\Ivec + \frac{g\mu_E}{\sqrt{N}} \left[  \begin{matrix}
n_E - 1 & -n_E\\
n_E & -n_E + \alpha
\end{matrix} \right]   \label{eqn:Eproblem_1p1}
\end{eqnarray}
which will be a complex pair as long as $n_E > (\alpha + 1)/4$, so $\lambda_{1,2} = \lambda \pm i \omega$ where
\begin{eqnarray*}
\lambda & = & -1 + \frac{g \mu_E}{\sqrt{N}} \frac{\alpha - 1}{2}\\
\omega & = & \frac{g \mu_E}{\sqrt{N}} \sqrt{\alpha + 1} \sqrt{n_E - \frac{1+ \alpha}{4}}
\end{eqnarray*}
We note that $\lambda_E <  \lambda \equiv \Re(\lambda_{1,2}) < \lambda_I$. The eigenvalue associated with the excitatory population, $\lambda_E < 0$ for any value of $g$. In the (un-cortex-like) situation that the excitatory population were \textit{smaller} than the inhibitory population ($\alpha < 1$), then the complex pair would also be stable for all $\lambda < 0$. 
\end{enumerate}
\vspace{0.2in}

\noindent
\textbf{Example 1.2}:
We consider Example 1, but where self-coupling is \textit{not} reduced: $b_E, b_I = 1$.  
Consider the eigenvalues at the origin $\xvec = 0$ described in Example 1:
\begin{enumerate}
\item Since $b_E = 1$, $\lambda_E = -1$ with multiplicity $n_E - 1$;
\item Since $b_I = 1$, $\lambda_{I} = -1$, with multiplicity $n_{I} - 1$;
\item 2 remaining eigenvalues given by the $2 \times 2$ matrix $\tilde{\Jvec}$:
\begin{eqnarray}
\tilde{\Jvec} & = & -\Ivec + \frac{g\mu_E}{\sqrt{N}} \left[  \begin{matrix}
n_E  & -n_E\\
n_E & -n_E 
\end{matrix} \right]   \label{eqn:Eproblem_boring}
\end{eqnarray}
which \textit{also} has the (repeated) eigenvalue $-1$. 
\end{enumerate}
Thus, \textit{every} eigenvalue of $\Hvec$ is $-1$; crucially, this does not depend on the coupling parameter $g$. In \S \ref{sec:solutions_deterministic}, we describe how by varying $g$, bifurcations will occur at the origin; these \textit{cannot occur} if self-coupling is not reduced, as the eigenvalues of the Jacobian cannot pass through the imaginary axis.

(Another way reach the same conclusion, is to notice that $\Hvec$ is a rank-one matrix \cite{RA06}:
\begin{eqnarray}
\Hvec & = & {\uvec} \Onevec^T, \; {\rm where} \; \uvec  =  \left[  
\underbrace{\begin{matrix} \mu_E & \dots & \mu_E \end{matrix}}_{n_E}  \; \underbrace{\begin{matrix} \mu_I & \dots & \mu_I \end{matrix}}_{n_I} 
\right]^T
\end{eqnarray}
with at most one non-zero eigenvalue; since $\mu_E \nu_E + \mu_I \nu_I = 0$, this last eigenvalue is zero as well.)\\

\vspace{0.2in}
\noindent
\textbf{Example 2}:  Next, suppose that the cells have broken into three synchronized populations:  the excitatory cells ($n_E$ cells with activity $x_E(t)$) and two groups of inhibitory cells ($n_{I_1}$ and $n_{I_2}$ cells with activities $x_{I_1}$ and $x_{I_2}$ respectively). Then $n_0 = n_E$, $n_1 = n_{I_1}$ and $n_2 = n_{I_2}$; $a_0 = \frac{g \mu_E}{\sqrt{N}} \sech^2(g x_E)$, and $a_{1,2} = -\frac{g \alpha \mu_E}{\sqrt{N}} \sech^2(g x_{I_{1,2}})$; $b_0 = b_E a_0$ and $b_{1,2} = b_I a_{1,2}$. Therefore the Jacobian has eigenvalues
\begin{enumerate}
\item $\lambda_E = -1 - \frac{g \mu_E}{\sqrt{N}} \sech^2(g x_E) (1-b_E)$, with multiplicity $n_E - 1$;
\item $ \lambda_{I_1} = -1 + \frac{g \alpha \mu_E}{\sqrt{N}} \sech^2(g x_{I_1})(1-b_I)$, with multiplicity $n_{I_1} - 1$;
\item $ \lambda_{I_2} = -1 + \frac{g \alpha \mu_E}{\sqrt{N}} \sech^2(g x_{I_2})(1-b_I)$, with multiplicity $n_{I_2} - 1$;
\item three (3) remaining eigenvalues given by the $3 \times 3$ matrix $\tilde{\Jvec}$ described earlier.
\end{enumerate}
We note that $\lambda_E < 0$ always, as long as $b_E \leq 1$. The corresponding eigenvectors are:
\begin{enumerate}
\item 
$\vvec_E  =  {\rm span} \, \{ \left[  
\vvec_{n_E} \;
\underbrace{\begin{matrix}0 & \cdots & 0\end{matrix}}_{n_{I_1}} \;
\underbrace{\begin{matrix}0 & \cdots & 0 \end{matrix}}_{n_{I_2}} \right] \}, \qquad \vvec_{n_E} \perp \Onevec_{n_E}$;
\item 
$\vvec_{I_1}  =  {\rm span} \, \{ \left[  
\underbrace{\begin{matrix}0 & \cdots & 0\end{matrix}}_{n_{E}} \;
\vvec_{n_{I_1}} \;
\underbrace{\begin{matrix}0 & \cdots & 0 \end{matrix}}_{n_{I_2}} \right] \}, \qquad \vvec_{n_{I_1}} \perp \Onevec_{n_{I_1}}$;
\item
$\vvec_{I_2}  =  {\rm span} \, \{ \left[  
\underbrace{\begin{matrix}0 & \cdots & 0\end{matrix}}_{n_{E}} \;
\underbrace{\begin{matrix}0 & \cdots & 0 \end{matrix}}_{n_{I_1}} \;
\vvec_{n_{I_2}} \;\right] \}, \qquad \vvec_{n_{I_2}} \perp \Onevec_{n_{I_2}}$;
\item 
$\vvec_{\tilde{J}} =  {\rm span} \, \{ \left[  
\underbrace{\begin{matrix}c_E & \cdots & c_E\end{matrix}}_{n_E} \;
\underbrace{\begin{matrix}c_{I_1} & \cdots & c_{I_1}\end{matrix}}_{n_{I_1}} \;
\underbrace{\begin{matrix} c_{I_2} & \cdots & c_{I_2}\end{matrix}}_{n_{I_2}} \right] \}$
\end{enumerate}

\section{Solution families found in deterministic network ($\epsilon = 0$)} \label{sec:solutions_deterministic}
In this section, we use equivariant bifurcation theory to identify which solutions we expect to arise in the system \eqref{eqn:Gsys}, where $\Gvec = \Hvec/\sqrt{N}$. We will also demonstrate that these solutions actually arise in a small network where it is tractable to do numerical continuation to verify our calculations. Our main tool is the \textit{Equivariant Branching Lemma}, which tell us what type of solutions will arise at bifurcation points, when symmetries are present.

Before stating this result, we introduce some terminology: Let $\Gamma$ be a compact Lie group acting on $\mathbb{R}^N$; then we say that a mapping $F: \mathbb{R}^N  \rightarrow \mathbb{R}^N$ is \textit{$\Gamma$-equivariant} if $F(\gamma \xvec) = \gamma F(\xvec)$, for all $\xvec \in \mathbb{R}^N$ and $\gamma \in \Gamma$.  A one-parameter family of mappings  $F: \mathbb{R}^N  \rightarrow \mathbb{R}^N$ is \textit{$\Gamma$-equivariant}, if it is $\Gamma$-equivariant for each value of $\lambda$.

We say that $V$, a subspace of $\mathbb{R}^N$,  is \textit{$\Gamma$-invariant} if $\gamma \vvec \in V$, for any $\vvec$ and $\gamma \in \Gamma$. We furthermore say that the action of $\Gamma$ on $V$ is \textit{irreducible} if $V$ has no proper invariant subspaces; i.e. the only $\Gamma$-invariant subspaces of $V$ are $\{0\}$ and $V$ itself. 

For a group $\Gamma$ and a vector space $V$, we define the \textit{fixed-point subspace} for $\Gamma$, denoted $\rm{Fix} (\Gamma)$, to be all points in $V$ that are unchanged under any of the members of $\Gamma$; i.e. $\rm{Fix} (\Gamma) = \{ \xvec \in V : \gamma \xvec = \xvec, \forall \gamma \in \Gamma \}$. 
The \textit{isotropy subgroup of $\xvec \in V$}, denoted $\Sigma_x$, is the set of all members of $\Gamma$ under which $\xvec$ is fixed; i.e. $\Sigma_x = \{ \gamma \in \Gamma : \gamma \xvec = \xvec \}$. An \textit{isotropy subgroup of $\Gamma$} is a subgroup $\Sigma$ which is the isotropy subgroup, $\Sigma_x$, for some $\xvec \in V$.\\

Suppose we have a one-parameter family of mappings, $F(\xvec, \lambda)$, and we wish to solve $F(\xvec, \lambda)=0$. 
For any $(\xvec, \lambda) \in \mathbb{R}^n \times \mathbb{R}$, let $(dF)_{\xvec,\lambda}$ denote the $N \times N$ Jacobian matrix 
\[ \left( \frac{\partial F_j}{\partial x_k} (\xvec, \lambda) \right)_{j, k=1...N}
\] 
Then the Implicit Function Theorem states that we can continue to track a unique solution branch as a function of $\lambda$, as long as the Jacobian remains invertible. When this ceases to be true --- when $(dF)_{\xvec,\lambda}$ has a nontrivial kernel --- we have the possibility for a bifurcation. At this point the number of zero eigenvalues (whether there are one, or two, etc..) and a menagerie of further conditions, will determine the qualitative properties of the structural change that occurs.

What complicates this situation for $\Gamma$-equivariant mappings --- i.e. $F$ is $\Gamma$-equivariant for any value of the parameter $\lambda$ --- is that because of symmetries, \textit{multiple} eigenvalues will go through zero at once; however, the structural changes that occur are qualitatively the same as those that occur in a non-symmetric system, with a single zero eigenvalue. What changes is that we now have \textit{multiple} such solution branches, each corresponding to a subgroup of the original symmetries.  The following Lemma formalizes this fact:\\

\begin{Theorem} (Equivariant Branching Lemma: paraphrased from \cite{GSS88Vol2}, pg. 82, see also pg. 67-69 ): Let $F: \mathbb{R}^N \times \mathbb{R} \rightarrow \mathbb{R}^N$ be a one-parameter family of $\Gamma$-equivariant mappings with $F(\xvec_0, \lambda_0) = \Zerovec$. Suppose that $(\xvec_0, \lambda_0)$ is a bifurcation point and that, defining $V = \ker(dF)_{\xvec_0,\lambda_0}$, $\Gamma$ acts absolutely irreducibly on $V$. Let $\Sigma$ be an isotropy subgroup of $\Gamma$ satisfying 
\begin{eqnarray}
\rm{dim}\; \rm{Fix} (\Sigma) = 1,
\end{eqnarray}
where $\rm{Fix (\Sigma)}$ is the \emph{fixed-point subspace} of $\Sigma$: that is, $\rm{Fix} (\Sigma) \equiv \{ x \in V \mid \sigma x = x, \;  \forall \sigma \in \Sigma \}$. Then there exists a unique smooth solution branch to $F = 0$ such that the isotropy subgroup of each solution is $\Sigma$. \\
\end{Theorem}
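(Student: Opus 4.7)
The plan is to follow the standard Lyapunov--Schmidt approach and then exploit the fixed-point subspace. After translation I may assume $\xvec_0 = \Zerovec$ and $\lambda_0 = 0$. First I would carry out a Lyapunov--Schmidt reduction of $F = 0$ near $(\Zerovec, 0)$, producing a smooth reduced mapping $\phi : V \times \R \to V$ whose zeros are in bijection with those of $F$ in a neighborhood of the bifurcation point. The splitting complements used in the reduction can be chosen to be $\Gamma$-invariant because $\Gamma$ is a compact Lie group (average any inner product with respect to Haar measure), which guarantees that $\phi$ inherits $\Gamma$-equivariance from $F$.

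The crucial structural step is that, because $\phi(\gamma \vvec, \lambda) = \gamma \phi(\vvec, \lambda)$, the map $\phi$ sends $\Fix(\Sigma) \times \R$ into $\Fix(\Sigma)$ for every subgroup $\Sigma \subseteq \Gamma$: if $\sigma \vvec = \vvec$ for all $\sigma \in \Sigma$, then $\sigma \phi(\vvec, \lambda) = \phi(\sigma \vvec, \lambda) = \phi(\vvec, \lambda)$. Since $\dim \Fix(\Sigma) = 1$, choosing a unit vector $\vvec_*$ that spans $\Fix(\Sigma)$ reduces the bifurcation problem to a single scalar equation $g(t, \lambda) := \langle \vvec_*, \phi(t \vvec_*, \lambda) \rangle = 0$ for $t \in \R$ near $0$.

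Because $\xvec \equiv \Zerovec$ is a solution of $F = 0$ for all $\lambda$, we have $g(0, \lambda) \equiv 0$, and therefore $g(t, \lambda) = t \, h(t, \lambda)$ for some smooth function $h$. Nontrivial branches of zeros of $g$ correspond to zeros of $h$, and the Implicit Function Theorem yields a unique smooth branch through the origin provided $\partial h / \partial \lambda (0, 0) \neq 0$. This is where absolute irreducibility enters: the linear map $(d_{\vvec} \phi)_{\Zerovec, \lambda}$ commutes with every $\gamma \in \Gamma$, and absolute irreducibility says that the only real $\Gamma$-commuting linear maps on $V$ are scalar multiples of the identity. Hence $(d_{\vvec} \phi)_{\Zerovec, \lambda} = c(\lambda) I_V$ for a smooth scalar $c$ with $c(0) = 0$ (since $V$ is the kernel at the bifurcation point), and a short Taylor expansion identifies $h(0, \lambda) = c(\lambda)$. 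The nondegeneracy reduces to $c'(0) \neq 0$, the standard eigenvalue-crossing condition.

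The hardest part, I expect, is being precise about what ``bifurcation point'' is assumed to supply --- in particular, whether $c'(0) \neq 0$ is built in or must be stated as a separate hypothesis --- and checking that the solutions lying in $\Fix(\Sigma)$ actually have isotropy equal to $\Sigma$ (rather than some strictly larger subgroup) for small $t \neq 0$. Existence and uniqueness of the smooth branch then come together from the IFT applied to $h$, while the isotropy statement follows because the branch is confined to $\Fix(\Sigma)$ by construction and, generically along the curve, no additional elements of $\Gamma$ fix the parameterized points.
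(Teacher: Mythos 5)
The paper does not prove this statement --- it is imported verbatim (with citation) from Golubitsky--Stewart--Schaeffer \cite{GSS88Vol2}, so there is no in-paper proof to compare against. Your proposal reproduces the canonical argument from that source essentially exactly: Lyapunov--Schmidt reduction with $\Gamma$-invariant complements, restriction to the one-dimensional $\Fix(\Sigma)$, factoring out the trivial branch, and absolute irreducibility forcing the linearization to be $c(\lambda)I$ with the crossing condition $c'(0)\neq 0$ supplying the nondegeneracy. The two caveats you flag are the right ones, and both resolve as you suspect: the crossing condition must indeed be read into the hypothesis ``bifurcation point,'' and the isotropy of every nonzero point of $\Fix(\Sigma)$ equals $\Sigma$ exactly because $\Sigma$ is assumed to be an isotropy subgroup and all nonzero points of a one-dimensional fixed-point subspace share the same isotropy.
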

A similar statement holds for Hopf bifurcations, which we state here because we will appeal to its conclusions regarding the symmetry of periodic solutions:
\begin{Theorem} (Equivariant Hopf Theorem: paraphrased from \cite{GSS88Vol2}, pg. 275) Let $F$ be a one-parameter family of $\Gamma$-equivariant mappings with $F(\xvec_0, \lambda_0) = \Zerovec$. Suppose that $(dF)_{\xvec_0,\lambda_0}$ has one or more pairs of complex eigenvalues $\rho \pm i \omega$, which satisfy $\rho(\lambda_0) = 0$ (i.e. the eigenvalues are pure imaginary at $\lambda_0$) and $\rho'(\lambda_0) \not= 0$.
Define $V$ to be the corresponding real (i.e. not generalized) eigenspace. Let $\Sigma$ be an isotropy subgroup of $\Gamma$ satisfying 
\begin{eqnarray}
\rm{dim}\; \rm{Fix} (\Sigma) = 2.
\end{eqnarray}
where $\rm{Fix (\Sigma)}$ is the \emph{fixed-point subspace} of $\Sigma$: that is, $\rm{Fix} (\Sigma) \equiv \{ x \in V \mid \sigma x = x, \;  \forall \sigma \in \Sigma \}$.
Then there exists a unique branch of small-amplitude periodic solutions (with period $2\pi / \omega$), having $\Sigma$ as their group of symmetries.\\
\end{Theorem}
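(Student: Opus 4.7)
The plan is to reduce the infinite-dimensional problem of finding small-amplitude periodic orbits to a finite-dimensional, $\Gamma \times S^1$-equivariant bifurcation equation on the critical eigenspace $V$, and then to exploit the hypothesis $\dim \mathrm{Fix}(\Sigma) = 2$ to extract a unique nontrivial branch via the implicit function theorem. The standard tool is Lyapunov--Schmidt reduction applied to an operator on loops, together with the observation that time-translation endows loop space with an auxiliary $S^1$-action commuting with $\Gamma$.

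First I would rescale time by a period-correction factor $1+\tau$ and define $\mathcal{F}(\xvec, \lambda, \tau) \equiv (1+\tau)\,\omega\, d\xvec/ds - F(\xvec, \lambda)$ as an operator between suitable Banach spaces of $2\pi$-periodic $\mathbb{R}^N$-valued maps. The pointwise $\Gamma$-action and the phase-shift $S^1$-action $\theta \cdot \xvec(s) = \xvec(s-\theta)$ commute, making $\mathcal{F}$ equivariant under $\Gamma \times S^1$. The linearization $D\mathcal{F}$ at $(\xvec_0, \lambda_0, 0)$ has kernel spanned by Fourier modes $e^{\pm i s}\vvec$ with $\vvec$ in the complex eigenspace for $\pm i\omega$; this kernel is canonically identified with $V$, with the $S^1$-rotation turning real 2d-planes into complex 1d-lines. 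Because $\Gamma \times S^1$ is compact, spectral projectors can be averaged to commute with the group, so Lyapunov--Schmidt produces a reduced map $g: V \times \mathbb{R}^2 \to V$ that is $\Gamma \times S^1$-equivariant and whose zeros correspond bijectively to small periodic solutions of the original system near $(\xvec_0, \lambda_0)$.

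Next I would restrict $g=0$ to $\mathrm{Fix}(\Sigma) \subset V$. By $\Sigma$-equivariance, $g$ maps $\mathrm{Fix}(\Sigma)$ into itself, giving a two-dimensional equation in $\mathrm{Fix}(\Sigma) \cong \mathbb{C}$ with two parameters $(\lambda, \tau)$. The residual $S^1$-action acts by rotations on this complex line, so using polar coordinates $z = r e^{i\psi}$, $S^1$-equivariance forces $g$ to take the form $g(z) = z\, h(r^2, \lambda, \tau)$. The trivial branch $r=0$ factors out, leaving the single real equation $h(r^2, \lambda, \tau) = 0$. The eigenvalue-crossing condition $\rho'(\lambda_0) \neq 0$ supplies exactly the non-degeneracy needed to apply the implicit function theorem and solve for $(\lambda, \tau)$ as smooth functions of $r$, giving a unique smooth branch of periodic solutions each of which lies in $\mathrm{Fix}(\Sigma)$ and therefore has $\Sigma$ among its symmetries.

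The main obstacle is verifying that the Lyapunov--Schmidt reduction respects both group actions simultaneously; the $S^1$-action lives only on loop space and is invisible in the original ODE, so one must check that the rescaling device and the chosen Banach spaces are compatible with phase shifts and that the spectral projections can be made $(\Gamma \times S^1)$-equivariant. A secondary technical point is showing that $\mathrm{Fix}(\Sigma)$ is precisely one complex dimension on which $S^1$ acts faithfully --- this is what enables factoring out the phase and is what the hypothesis $\dim_{\mathbb{R}} \mathrm{Fix}(\Sigma) = 2$ is really buying us. Both issues are handled in detail in \cite{GSS88Vol2}; the verification specific to our setting reduces to checking that the complex structure on $V$ induced by $(dF)_{\xvec_0, \lambda_0}$ on the imaginary-eigenvalue eigenspace is compatible with the $\Gamma$-action, which is automatic since $\Gamma$ commutes with $(dF)_{\xvec_0, \lambda_0}$.
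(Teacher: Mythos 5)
The paper does not prove this theorem at all --- it is stated as a paraphrase of a result cited from Golubitsky--Stewart--Schaeffer (Vol.~II, p.~275) --- and your Lyapunov--Schmidt reduction on loop space with the auxiliary $S^1$ phase-shift action commuting with $\Gamma$ is precisely the argument given in that reference, so your proposal matches the (cited) proof in approach and is essentially correct. One small slip worth noting: after factoring the reduced map as $g(z) = z\,h(r^2,\lambda,\tau)$, the equation $h=0$ is complex-valued, i.e.\ two real equations rather than ``a single real equation,'' which is exactly why the implicit function theorem (using the crossing condition $\rho'(\lambda_0)\neq 0$) can deliver both $\lambda$ and $\tau$ as functions of $r$.
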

Here, the family of mappings is the right-hand side of Eqn. \eqref{eqn:Gsys}, with $\epsilon = 0$; i.e. $F(\xvec, g) = -\xvec + \Hvec \tanh(g\xvec) /\sqrt{N}$ (our parameter is denoted $g$ rather than $\lambda$). Let $\Gamma = S_{n_E} \oplus S_{n_I}$, where $S_n$ is the symmetric group on $n$ symbols; that is, we are allowed to permute the labels on the excitatory cells, and/or to permute the labels on the inhibitory cells.

It is straightforward to check that $F$ is $\Gamma$-equivariant \footnote{For example, consider $k \le n_E$; then $F_k(\xvec,g) = -x_k - \frac{\mu_E}{\sqrt{N}} \tanh(g x_k) + \sum_{j \le n_E} \frac{\mu_E}{\sqrt{N}} \tanh(g x_j) -\sum_{j > n_E} \frac{\alpha \mu_E}{\sqrt{N}} \tanh(g x_j)  = -x_k - \frac{\mu_E}{\sqrt{N}} \tanh(g x_k) + C$, where $C$ is the same for any cell. $C$ is clearly unchanged under any permutation of the labels of the excitatory cells, or any permutation of the inhibitory cells.}.  Each permutation on $N$ objects can be represented as an element in  $GL(N)$, the group of invertible $N \times N$ matrices; $\Gamma$ is a finite subgroup of such matrices and thus has the structure of a Lie group \cite{GSS88Vol2}; since it has a finite number of elements it is also bounded and thus compact. 
 
Since our model satisfies the assumptions of the Equivariant Branching Lemma, it remains for us to identify potential bifurcation points (we concentrate on absent self-coupling, i.e. $b_E, b_I = 0$). From the trivial solution ($\xvec = \Zerovec$), we expect solutions to arise when the eigenvalues of $-\Ivec + g \Hvec/\sqrt{N}$ cross the imaginary axis. In particular, we expect, in order of increasing $g$,
\begin{itemize}
\item A branch of fixed-point solutions when $g^{\ast} = \sqrt{N}/\alpha/\mu_E$: where the eigenvalues corresponding to the inhibitory population cross zero: here the $I$ cells break into 2 groups of size $n_{I1}$ and $n_{I2}$. Along this fixed point branch, the two groups remain clustered; the excitatory cells also remain clustered, i.e. the solution branch can be characterized by $(x_E,x_{I1},x_{I2})$.  We refer to this as the ``$I_1/I_2$ branch".
\item A branch of limit cycles emerging from a Hopf bifurcation when $g = 2\sqrt{N}/ \mu_E/ (\alpha-1)$: here a complex pair cross the imaginary axis.\\
\end{itemize}

From each $I_1/I_2$ branch, we find:\\
\begin{itemize}
\item A branch of limit cycles from a Hopf bifurcation (at $g^H$) in which the three cluster pattern is maintained: i.e. activity can be characterized by $(x_E, x_{I1}, x_{I2})$.
\item If $n_{I1} = n_{I2}$, then the excitatory activity along this branch is zero: there may be a further branch point, in which $x_E$ moves away from the origin, while $I$ cells remain in their distinct clusters.
\item  (Possibly) other fixed point branches, in which one inhibitory cluster ($x_{I1}$) breaks into further clusters.
\end{itemize}

\subsection{Branch of fixed points (from trivial solution)}

The first opportunity for a bifurcation from the trivial solution occurs when $g^{\ast} = \sqrt{N}/\alpha/\mu_E$: at this value of $g$, $n_I -1$ eigenvalues pass through zero: the corresponding eigenspace (from Example 1) is the set of all zero-sum vectors with support in the inhibitory cells only; i.e. \[ V \equiv  \ker(dF)_{\Zerovec,g^{\ast}}  = {\rm span} \, \{ \left[  
\underbrace{\begin{matrix}0 & \cdots & 0\end{matrix}}_{n_{E}} \;
\vvec_{n_{I}} \right] \}, \qquad \vvec_{n_{I}} \perp \Onevec_{n_{I}}.\]
To check that $\Gamma$ acts irreducibly on $V$ it is sufficient to show that the subspace spanned by the \textit{orbit} of a single vector $\vvec$ (defined as the set of all values  $\gamma \vvec$, for all $\gamma \in \Gamma$) is full rank; this can be readily confirmed for $\vvec_{n_{I}} = \left[ \begin{array}{ccccc} 1 & -1 & 0 & ... & 0 \end{array} \right]$, for example.   

Suppose we break the inhibitory cells up into precisely two clusters; we allow all permutations within each cluster, but no longer allow mixing between the clusters. This describes a subgroup of $\Gamma$, $\Sigma = S_{n_E} \oplus S_{n_{I_1}} \oplus S_{n_{I_2}}$, $n_{I_1} + n_{I_2} = n_I$.
Assuming that (without loss of generality) the $I_1$ neurons have the indices $n_E+1,...,n_E+n_{I_1}$, and so forth, $\Sigma$ has the fixed point subspace 
\begin{eqnarray}
\rm{Fix}(\Sigma) & = & {\rm span} \, \{ \left[  
\underbrace{\begin{matrix}0 & \cdots & 0\end{matrix}}_{n_E} \;
\underbrace{\begin{matrix}1 & \cdots & 1\end{matrix}}_{n_{I_1}} \;
\underbrace{\begin{matrix}-\frac{n_{I_1}}{n_{I_2}} & \cdots & -\frac{n_{I_1}}{n_{I_2}} \end{matrix}}_{n_{I_2}} \right] \}
\end{eqnarray}
We can check that $\rm{Fix}(\Sigma)$ is a subspace of $V$; furthermore $\dim \rm{Fix}(\Sigma) = 1$ because it can be described as the span of a single vector.

Thus, the Equivariant Branching Lemma tells use that we can expect a new branch of fixed points in which the inhibitory cells break up into two groups (therefore we refer to this as the ``$I_1/I_2$ branch"). 

 
If the clusters are of equal size ($n_{I_1} = n_{I_2}$), then the solution branch shows the pattern $(0,x_{I_1},-x_{I_1})$ (by uniqueness, it suffices to show that such a branch exists). To see this, first observe that 
\[
\frac{dx_E}{dt} = -x_E - \frac{\mu_E}{\sqrt{N}}\tanh(g x_E) + C,  \qquad \frac{dx_{I_{1,2}}}{dt} = -x_{I_{1,2}} + \frac{\alpha \mu_E}{\sqrt{N}}\tanh(g x_{I_{1,2}}) + C
\]
where 
\[
C  = \frac{\sqrt{N} \mu_E \alpha}{\alpha+1} \left( \tanh(g x_E) -(1/2) \tanh(g x_{I_1}) - (1/2) \tanh(g x_{I_2}) \right).
\]
If $x_{I_2} = -x_{I_1}$, then $\tanh(g x_{I_2}) = -\tanh(g x_{I_I})$ and therefore
\begin{eqnarray*}
\frac{dx_{I_{1}}}{dt}  + \frac{dx_{I_{2}}}{dt}  & = & -x_{I_{1}} -x_{I_{2}} + \frac{\alpha \mu_E}{\sqrt{N}} \left( \tanh(g x_{I_{1}}) + \tanh(g x_{I_{2}}) \right) + 2C\\
& = & -x_{I_{1}} +x_{I_{1}} + \frac{\alpha \mu_E}{\sqrt{N}} \left( \tanh(g x_{I_{1}}) - \tanh(g x_{I_{1}}) \right) + 2C = 2C 
\end{eqnarray*}
while
\begin{eqnarray*} 
C &  = & \frac{\sqrt{N} \mu_E \alpha}{\alpha+1} \left( \tanh(g x_E) -(1/2) \tanh(g x_{I_1}) + (1/2) \tanh(g x_{I_1}) \right) = \frac{\sqrt{N} \mu_E \alpha}{\alpha+1}  \tanh(g x_E) 
\end{eqnarray*}
Since $\frac{dx_{I_{1}}}{dt}  + \frac{dx_{I_{2}}}{dt} = 0$, $\tanh(g x_E) = 0 \Rightarrow x_E = 0$.

Returning to the inhibitory degrees of freedom, we see their equations are now decoupled:
\[
\frac{dx_{I_{1,2}}}{dt} = -x_{I_{1,2}} + \frac{\alpha \mu_E}{\sqrt{N}}\tanh(g x_{I_{1,2}}) \]
a fixed point has \textit{three} possible solutions, if $g > g^{\ast}$; one is $x_{I_{1,2}}  = 0$, while the others can be found by inverting a simple expression relating
$g$ and $x_{I_1}$ along the solution branch:
\begin{eqnarray}
 -x_{I_{1}} + \frac{\alpha \mu_E}{\sqrt{N}}\tanh(g x_{I_{1}}) = 0 & \Rightarrow & g = \frac{1}{x_{I_1}} \tanh^{-1} \left( \frac{\sqrt{N}x_{I_1}}{\alpha \mu_E}\right) \label{eqn:g_tanhinv}
\end{eqnarray}
Thus, we can solve for $g$ as a function of $x_{I_1}>0$ and set $x_{I_{2}} = -x_{I_{1}}$; checking the Taylor expansion of Eqn. \eqref{eqn:g_tanhinv} will confirm that $x_{I_{1}} \rightarrow 0$ as $g\rightarrow g^{\ast}$. 

%

\subsection{Hopf bifurcation (on trivial solution) leading to limit cycles} \label{sec:Hopf_trivial}
The trivial solution is next expected to have a bifurcation when the complex pair of eigenvalues of $-\Ivec + g \Hvec/\sqrt{N}$ crosses the imaginary axis: that is, when
\begin{eqnarray*}
g & = & \frac{2 \sqrt{N}}{\mu_E (\alpha-1)}
\end{eqnarray*} 
This is a simple eigenvalue pair, with real eigenspace (again by Example 1) consisting of vectors with all E cells synchronized and all I cells synchronized. This is a two-dimensional vector space: therefore, we expect a branch of periodic solutions to arise in which the excitatory neurons and inhibitory neurons are each synchronized.  Here $\Sigma = \Gamma = S_{n_E} \oplus S_{n_I}$.

\subsection{Hopf bifurcation (on $I_1/I_2$ branch) leading to limit cycles} 
On the branch $(x_E, x_{I_1}, x_{I_2})$ we find two singularities that lead to new structures. Most significantly we find a supercritical Hopf bifurcation that leads to a branch of limit cycles, when a pair of complex eigenvalues crosses the imaginary axis. By Example 2, the corresponding eigenspace is fixed under  $\Sigma = S_{n_E} \oplus S_{n_{I_1}} \oplus S_{n_{I_2}}$.  Thus, it is a two-dimensional subspace of $\rm{Fix}(\Sigma)$; therefore, by the Equivariant Hopf Theorem, the family of periodic solutions that emerges here also has $\Sigma$ as its group of symmetries \footnote{While we do not need this theorem to tell us that a Hopf bifurcation occurs, as the eigenvalue pair is simple, it does guarantee that the resulting solutions have the same symmetry group}. 

%
In general 
it is not feasible to solve for $g^{H}$ symbolically: this requires us to solve for the roots of a cubic polynomial involving exponential functions (e.g. $\tanh(g x_E)$) of implicitly defined parameters $x_E$, $x_{I_1}$, and $x_{I_2}$. However, we can identify the bifurcation numerically (all continuations were performed with MATCONT \cite{Matcont03}), and we have found this bifurcation on every specific $I_1/I_2$ branch in every specific system we have investigated.

We can also track the branch of Hopf points numerically in the reduced system $(x_E, x_{I_1}, x_{I_2})$ (described in \S \ref{sec:3cluster}), which has the added benefit that the complexity of the system does not increase with $N$ (rather $N$ is a bifurcation parameter).  Here again, we can confirm that the Hopf bifurcation is present in the system for any $N$, and have done so for several example $n_{I_1}/n_{I_2}$ ratios in \S \ref{sec:largeN}.

\subsection{Branch points (on $I_1/I_2$ branch) leading to new fixed point branch}
We may also find branch points on the $(x_E, x_{I_1}, x_{I_2})$
curve, in which one of the inhibitory clusters breaks into a further cluster. This will occur if the eigenspace corresponding to $x_{I_1}$, say, has a real eigenvalue going through zero. Because these did not appear to play a significant role in our simulations, we will consider them no further here. 

\subsection{Reduced self-coupling}
For the remainder of the paper, we will focus on absent self-coupling ($b_E, b_I = 0$); here we note how our conclusions would be modified, in the more general case. At the origin, the locations --- but not the qualitative behavior --- of the bifurcations will change. In Example 1.1, a branch point occurs at $g^{\ast} = \frac{\sqrt{N}}{\alpha \mu_E}$; in Example 1, the location is now $g^{\ast, b} = \frac{\sqrt{N}}{\alpha \mu_E (1-b_I)}$. Since $b_I \le 1$,  $g^{\ast,b} \ge g^{\ast}$ always. 

Similarly the Hopf bifurcation which occurs at
\begin{eqnarray*}
g^H & = & \frac{2 \sqrt{N}}{\mu_E (\alpha-1)}
\end{eqnarray*} 
with no self-coupling will now occur at
\begin{eqnarray*}
g^{H,b} & = & \frac{2 \sqrt{N}}{\mu_E \left( \alpha (1-b_I) - (1-b_E)\right)}
\end{eqnarray*} 
provided that $\alpha(1-b_I) - (1-b_E) > 0$ (see the formula for $\lambda$ below Eqn. \eqref{eqn:Eproblem}).

The relative ordering of $g^H$ and $g^{H,b}$ would depend on the relative values of $b_E$ and $b_I$; if $b_E - \alpha b_I \le 0$, then $g^{H,b} \ge g^{H}$; otherwise $g^{H,b} < g^{H}$. However, we can check that the branch point will almost always occur for a smaller coupling value (than the Hopf point); that is $g^{\ast,b} \le g^{H,b}$, with equality if and only if $b_E = 1$.

\subsection{Inhibition-dominated networks}
In this paper we have focused on balanced networks ($\alpha = n_I/n_E$). We briefly summarize how our conclusions would change, in inhibition-dominated networks ($\alpha > \tilde{\alpha} \equiv n_I/n_E$). At the origin, the location of $g^{\ast}$ would still be given by $\frac{\sqrt{N}}{\alpha \mu_E (1-b_I)}$, although now 
since $\alpha > \tilde{\alpha}$, the critical coupling value would decrease; i.e. $g^{\ast,b,in} < g^{\ast,b}$. 

In Eqns. \eqref{eqn:Eproblem} and \eqref{eqn:Eproblem_1p1}, the condition that $n_E = \alpha n_I$ has been used; to remove this restriction, replace any instance of $n_E$ in the right column of $\tilde{\Jvec}$ with $\alpha n_I$. 
The condition for a Hopf bifurcation to occur at the origin would now be (using the trace of $\tilde{\Jvec}$ from Eq. \ref{eqn:Eproblem}):
\[ n_E - \alpha n_I + \alpha(1-b_I) - (1-b_E)  > 0 \Rightarrow  (\tilde{\alpha}-\alpha) n_I + \alpha(1-b_I) - (1-b_E)  > 0\]
or
\[ \alpha (1-b_I) - (1-b_E) > (\alpha - \tilde{\alpha}) (1-f) N \]
Thus the Hopf bifurcation will still occur as long as inhibition is not too strong (as measured by $\alpha - \tilde{\alpha}$); however, this depends on $N$.

\section{A bifurcation-preserving reduced-order model}  \label{sec:N20}

In this section, we show that we can construct a reduced-order model that preserves the dynamics and bifurcation structure of the full system, but with a dramatic reduction in the number of degrees of freedom. For a cortex-like ratio of E to I cells, the interesting bifurcations occur surrounding the eigenvalues associated with the inhibitory cells or the complex pair.  As a result, all the ``action" is in the I cells, with the E cells perfectly synchronized always. In fact, we can formalize this as follows:\\

\begin{Lemma}Any fixed point or periodic solution of \eqref{eqn:Gsys}, \eqref{eqn:G_def} with $\epsilon = 0$ has a synchronized excitatory population: i.e. $x_j(t) = x_k(t)$, for any $j,k \leq n_E$.\\
\end{Lemma}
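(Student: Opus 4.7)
The plan is to exploit the fact that, when $\epsilon = 0$, every excitatory neuron receives exactly the same net input from the other cells, so the excitatory coordinates all obey the same scalar non-autonomous ODE driven by a common forcing term. The dissipative $-x$ term (reinforced by any self-inhibition from $b_E < 1$) then makes the synchronization manifold strongly attracting, which forces any bounded recurrent solution to lie on it.

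First I would write out the equation of motion for $x_i$ with $i \le n_E$ and collect the terms that depend only on $i$ versus those that are common to all excitatory cells. A short calculation gives
\begin{equation*}
\dot{x}_i \;=\; -x_i \;-\; \tfrac{\mu_E (1-b_E)}{\sqrt{N}}\tanh(g x_i) \;+\; C(t),
\end{equation*}
where $C(t) = \tfrac{\mu_E}{\sqrt{N}} \sum_{j\le n_E}\tanh(g x_j) + \tfrac{\mu_I}{\sqrt{N}} \sum_{j> n_E}\tanh(g x_j)$ is independent of $i$. Thus every excitatory coordinate satisfies the same one-dimensional equation $\dot{x}_i = F(x_i, t)$ with $F(x,t) = -x - \tfrac{\mu_E(1-b_E)}{\sqrt{N}}\tanh(gx) + C(t)$, and the key observation is that $F_x(x,t) = -1 - \tfrac{g \mu_E(1-b_E)}{\sqrt{N}}\sech^2(gx) \le -1$ for all $x,t$ (using $\mu_E > 0$ and $b_E \le 1$).

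Next I would consider any two excitatory indices $j,k \le n_E$ and set $d(t) = x_j(t) - x_k(t)$. By the mean value theorem $\dot d = F(x_j,t) - F(x_k,t) = -a(t)\, d(t)$, where $a(t) = \int_0^1 \bigl[1 + \tfrac{g\mu_E(1-b_E)}{\sqrt{N}}\sech^2\!\bigl(g(x_k + s d)\bigr)\bigr]\,ds \ge 1$. Integrating gives $d(t) = d(0)\exp\bigl(-\int_0^t a(s)\,ds\bigr)$. For a fixed point, $d$ is constant and $a\ge 1$ forces $d\equiv 0$ (equivalently, $F(\cdot,t)$ with $C$ constant is strictly monotone, so $F(x_j)=F(x_k)$ already implies $x_j=x_k$). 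For a periodic solution of period $T$, periodicity gives $d(T) = d(0)$, so
\begin{equation*}
d(0)\Bigl(1 - \exp\!\bigl(-{\textstyle\int_0^T} a(s)\,ds\bigr)\Bigr) = 0,
\end{equation*}
and since $\int_0^T a\,ds \ge T > 0$ the prefactor is strictly positive, forcing $d(0) = 0$ and hence $d \equiv 0$.

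There is no real obstacle: the structure of $\Hvec$ does all the work by making the drive $C(t)$ common to every excitatory cell, and the uniform bound $F_x \le -1$ upgrades the $S_{n_E}$ equivariance into a contracting action onto the synchronized subspace. The only small caveats worth flagging are that the argument uses $\mu_E \ge 0$ and $b_E \le 1$ (both assumed in the paper) and that an identical argument would \emph{not} synchronize the inhibitory cells, since the corresponding $F_x$ for inhibitory neurons is $-1 + \tfrac{g\alpha\mu_E(1-b_I)}{\sqrt{N}}\sech^2(gx)$, whose sign is not fixed — consistent with the $I_1/I_2$ splitting already identified in \S\ref{sec:solutions_deterministic}.
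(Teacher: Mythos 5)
Your proof is correct and follows essentially the same route as the paper: both reduce to showing that the difference of two excitatory coordinates obeys a contracting scalar ODE (the common drive cancels, and the residual $-x - \tfrac{\mu_E(1-b_E)}{\sqrt{N}}\tanh(gx)$ term is strictly monotone decreasing), which forces that difference to vanish on any fixed point or periodic orbit. The paper obtains the contraction via a $\tanh$ addition identity and the differential inequality $\frac{d}{dt}\|x_1-x_2\|^2 \le -2\|x_1-x_2\|^2$ rather than your mean-value-theorem integrating factor, and its written computation implicitly takes $b_E=0$, but these are cosmetic differences.
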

\begin{proof}
Consider the activity of two distinct $E$ cells, say $x_1$ and $x_2$. Then 
\begin{eqnarray}
\frac{d(x_1 - x_2)}{dt} & = & \dot{x_1} - \dot{x_2} \nonumber \\
& = & -x_1 + \frac{\mu_E}{\sqrt{N}} \tanh(g x_2) - (-x_2 + \frac{\mu_E}{\sqrt{N}} \tanh(g x_1)) \label{eqn:x1_minus_x2_A}\\ 
& = & -(x_1 - x_2) - \frac{\mu_E}{\sqrt{N}} \left( \tanh(g x_1) - \tanh(g x_2) \right) \nonumber \\
& = & -(x_1 - x_2) - \frac{\mu_E}{\sqrt{N}} \tanh \left( g(x_1 - x_2) \right) \Bigl( 1 - \tanh(g x_1) \tanh(g x_2) \Bigr) \label{eqn:x1_minus_x2_B}
\end{eqnarray}
The first numbered line, \eqref{eqn:x1_minus_x2_A}, contains so few terms because everything depending on other variables ($x_3$, and so forth) cancels out; the second line, \eqref{eqn:x1_minus_x2_B}, uses a sum identity for the tanh function.  
Then 
\begin{eqnarray*}
\frac{d \| x_1 - x_2 \|^2}{dt} & = & 2(x_1 - x_2) \times  \Bigl[ -(x_1 - x_2) - \frac{\mu_E}{\sqrt{N}} \tanh \left( g(x_1 - x_2) \right) \Bigl( 1 - \tanh(g x_1) \tanh(g x_2) \Bigr)  \Bigr]\\
& = & -2\| x_1 - x_2 \|^2 -  \frac{\mu_E}{\sqrt{N}} \Bigl( (x_1-x_2) \tanh \left(g(x_1-x_2 \right) \Bigr) \Bigl( 1 - \tanh(g x_1) \tanh(g x_2) \Bigr)\\
& \leq & -2\| x_1 - x_2 \|^2
\end{eqnarray*}
with equality if and only if $x_1 = x_2$. In the last line, we use the facts that $x \tanh (gx) \geq 0$ and $( 1 - \tanh(g x) \tanh(g y)) \geq 0$ for any real numbers $x$ and $y$, and $g > 0$.
Therefore the distance $\| x_1 - x_2 \|$ will always decrease along a trajectory, unless $x_1 = x_2$ already.
\end{proof}
As a consequence, any fixed point or period solution present in the full system is also present in the following reduced system, where we collapse all of the excitatory degrees of freedom into one $x_E$:
\begin{eqnarray}
\dot x_E & = & -x_E + \left( \frac{N\alpha}{\alpha+1} - 1\right) \left( \frac{\mu_E}{\sqrt{N}}\right) \tanh(g x_E) - \sum_{j=1}^{n_I} \left( \frac{\alpha \mu_E}{\sqrt{N}}\right) \tanh(g x_{I_j})  \label{eqn:reduced_plus1_xE}\\
\dot x_{I_i} & = & -x_{I_i} + \frac{N\alpha}{\alpha+1}    \left( \frac{\mu_E}{\sqrt{N}}\right)  \tanh(g x_E) - \sum_{j=1,j\not=i}^{n_I} \left( \frac{\alpha \mu_E}{\sqrt{N}}\right) \tanh(g x_{I_j}), \qquad i=1,...,n_I  \label{eqn:reduced_plus1_xI}
\end{eqnarray}
Here, $n_I$ is fixed, while $N = (\alpha + 1)n_I$ is a parameter of the system. This allows us to explore solutions in a $n_I + 1$ dimensional system rather than a $(\alpha+1) n_I$ dimensional system. 

\begin{figure}[t]
\begin{center}
\includegraphics[width=5cm]{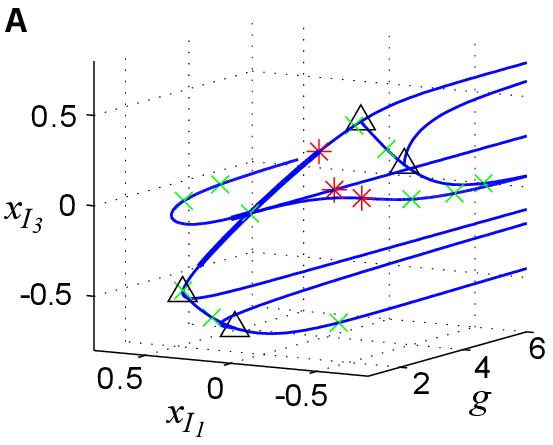}
\includegraphics[width=5cm]{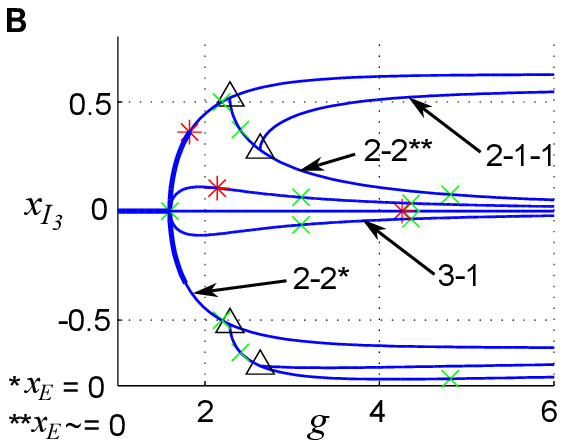}
\includegraphics[width=5cm]{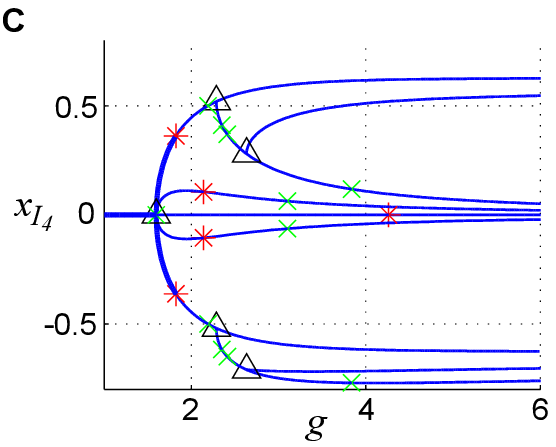}
\caption{Solution branches from symmetry, in the $N=20$ system. (A) Solution branches (up to symmetry) in $(g, x_{I_1}, x_{I_3})$ coordinates. (B) Same figure, viewed in the $(g, x_{I_3})$ plane. (C) Solution branches from the reduced $n_I + 1$-dimensional system, with $n_I = 4$ and $\alpha = 4$. Up to symmetry, this figure depicts identical solutions as the left and center panels. Markers indicate: Hopf bifurcations (red asterisks); branch points (black triangles); neutral saddles (green crosses). } \label{fig:eq_sym_sol_branch}
\end{center}
\end{figure}

\begin{figure}[t]
\begin{center}
\includegraphics[width=5cm]{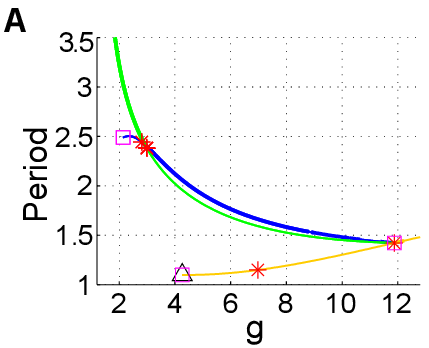} \;
\includegraphics[width=5cm]{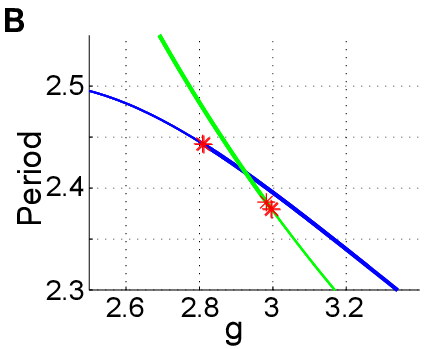}
\end{center}
\caption{Limit cycles in the $N=20$ system. (A) Period vs coupling parameter $g$. (B) Closeup of period vs. $g$, near a stability change in the 3-1 curve. Colors are: 3-1 (blue), 2-2 (green), E/I (orange). Markers indicate: Neiman-Sackler (red asterisk), limit point of cycles (magenta square), branch pt. of cycles (black triangle)  } \label{fig:lc_sym_sol_branch}
\end{figure}

\begin{figure}[t]
\begin{center}
\includegraphics[width=5cm]{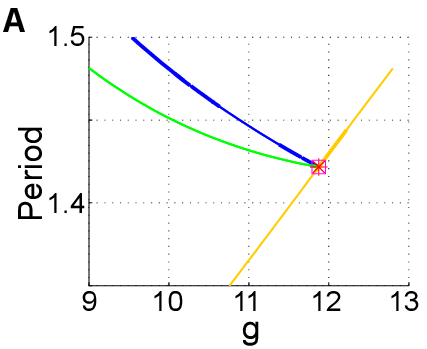} 
\includegraphics[width=5cm]{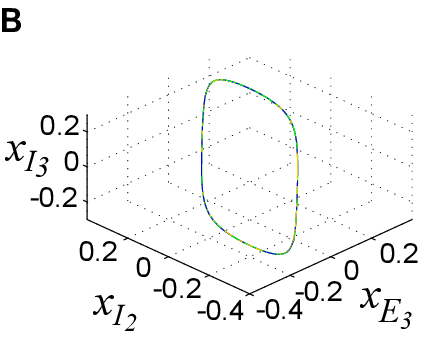}
\end{center}
\caption{Limit cycles in the $N=20$ system. (A) Closeup of period vs. $g$, near the point where three limit cycle branches collide. (B) Limit cycles at the point of collision, in $(x_E, x_{I_2}, x_{I_3})$ coordinates. Colors are: 3-1 (blue), 2-2 (green), E/I (orange). Markers indicate: Neiman-Sackler (red asterisk), limit point of cycles (magenta square), branch pt. of cycles (black triangle)  } \label{fig:lc_sym_sol_zoom_bif}
\end{figure}

We first demonstrate the bifurcation structure on a small network, $N=20$, in which we can comfortably confirm our findings on the full system with numerical continuation (all diagrams shown here were computed using MATCONT \cite{Matcont03}).
We treat the global coupling strength, $g$, as our bifurcation parameter: the origin is an equilibrium point for all $g$. 
At $g^{\ast} = \sqrt{N}/\alpha/\mu_E$, $n_I - 1$ eigenvalues pass through the origin ($\epsilon = 0$). This is a branch point: because of symmetry, there exists a branch corresponding to each possible split of the $I$ cells into two clusters. In Fig. \ref{fig:eq_sym_sol_branch}A, we show the solution branches that arise in the $N=20$ system (up to symmetry, that is: while there are four possible 3-1 splittings, we display only one here). Because there are four inhibitory cells, there are two possible splits: 3-1 and 2-2. Both have a branch that originates from the branch point on the origin $g^{\ast} = \sqrt{N}/\alpha/\mu_E$ (in Fig. \ref{fig:eq_sym_sol_branch}B, these are labeled as ``3-1" and ``2-2, e=0" respectively). Along the 2-2 branch, the E cells have zero activity (this is generally the case when the I cells split into two equal clusters). Both branches then have a Hopf bifurcation from which a branch of limit cycles emerges; unstable in the 3-1 case, stable in the 2-2 case. 
The resulting limit cycle respects the clustering, but the E cell activity is no longer zero in the 2-2 case.  

The 2-2 branch has a further branch point, at which the new branch breaks the $i_1/i_2$ odd symmetry and E cell activity moves away from zero.
One further branch occurs, in which one of the 2 cell clusters breaks apart resulting in a 2-1-1 clustering.
Why did the 2-1-1 branch come off of the 2-2 branch, rather than the 3-1 branch? At this time, we don't have a principled answer. 
Finally, the origin has a Hopf bifurcation in which the E cells and I cells separately cluster (we will refer to this as the ``E/I" limit cycle). 


We next perform the same continuation on the corresponding reduced (5-dimensional: $x_E$ and $x_{I_1}-x_{I_4}$) system. The equilibrium branch structure is shown in Fig. \ref{fig:eq_sym_sol_branch}C. Up to a permutation of the inhibitory coordinate labels (we did not force the same cell cluster identities to be tracked in both continuations), the curves are identical. 

Returning to the full system, we now consider the limit cycles which emerge from the three Hopf bifurcations we identified (on the 3-1 branch, 2-2 branch, and the origin).  
In Fig. \ref{fig:lc_sym_sol_branch}A, we plot the period vs. the coupling parameter $g$. In Fig. \ref{fig:lc_sym_sol_branch}B we can see that the 3-1 branch is stable for $g > \approx 2.8$; the 2-2 branch, for $g < \approx 3$. We note that the 3-1 and 2-2 branches appear to terminate on the E/I branch, shown in Fig. \ref{fig:lc_sym_sol_zoom_bif}A. Indeed, at this point all three limit cycles coincide, as we can see in  Fig. \ref{fig:lc_sym_sol_zoom_bif}B.


\subsection{A larger system: $n_I = 10$}
\begin{figure}[t]
\begin{center}
\includegraphics[height=5.2cm]{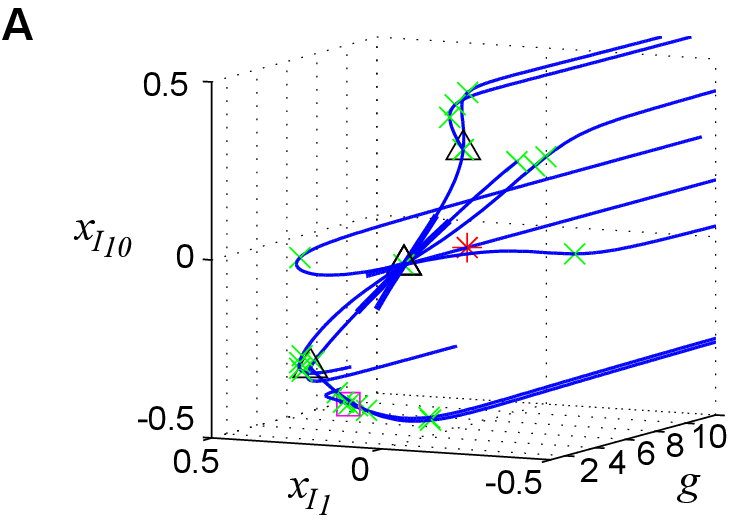}\; \; 
\includegraphics[height=5.2cm]{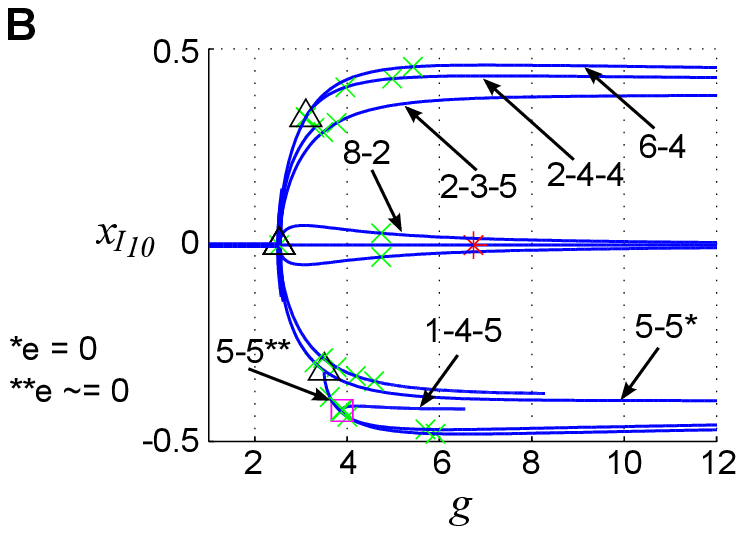}
\caption{Equilibrium solution branches in the  ``$x_I+1$" system, for $n_I = 10$. With the parameters chosen, this is equivalent to a $N=50$ full system. Two different viewpoints are shown (panels A and B). Markers indicate: Hopf bifurcations (red asterisks); branch points (black triangles); neutral saddles (green crosses).} \label{fig:eq_xIplus1}
\end{center}
\end{figure}
The real power of the reduced-order model becomes evident when we increase population size. We now show results for $n_I = 10$: note that for $\alpha=4$, we reduce the dimensionality of the system from 50 to 11. In Fig. \ref{fig:eq_xIplus1}A we show the equilibrium branches found in this system; the same diagram is plotted in the $(g, x_{I_{10}})$ plane, with labeled curves, in Fig. \ref{fig:eq_xIplus1}B. As expected, a branch point occurs on the origin at $g = \sqrt{N}/\alpha \mu_E$. From this point we see 5-5, 6-4, 8-2, and 2-3-5 solutions emerge. The 5-5 branch has zero activity in the excitatory cells; as in the previous example, a further branching point yields 5-5 solutions where $x_E \not= 0$. A further branch point gives a curve of 1-4-5 cluster solutions. There are Hopf bifurcations on each of the branches that appear at the origin.

We note that \textit{most} of these branches are cases where splitting is minimal; that is, a single cluster breaks into two (rather than into three). This confirms our intuition from the Equivariant Branching Lemma, which guarantees the existence of a unique branch of solutions for each subgroup $\Sigma$ for which the fixed point subspace on the kernel of the Jacobian at the bifurcation point has the right dimension: $\dim \Fix (\Sigma) = 1$. (A more general result
extends a version of this result to cases of odd dimensions \cite{Lauterbach2010}.)  At the origin, for example, the kernel at the branch point is $n_I-1$ dimensional:
\begin{eqnarray}
\vvec & = & \left[ \begin{matrix} 0 & \uvec \end{matrix} \right], \qquad \uvec \in \R^{n_I}, \;  \uvec^T \mathbf{1} = 0 
\end{eqnarray}
In this case, 
\begin{eqnarray}
\vvec & = & \left[ \begin{matrix} 0 & v_1 & \cdots & v_{10} \end{matrix} \right], \qquad  v_1 + \cdots + v_{10} = 0 
\end{eqnarray}
However, this lemma does not exclude the possibility of other solution types, and little is known in general about fixed point subspaces of \textit{even} dimensions: such solutions have been found in some systems (for example, \cite{Lauterbach2010}), but there is currently not a general theory guaranteeing or ruling out such solutions \cite{Lauterbach2015}. In this system, at least one branch corresponds to a subgroup $\Sigma$ for which $\dim \Fix (\Sigma) = 2$: the 2-3-5 branch.
\begin{figure}[t]
\begin{center}
\includegraphics[height=5cm]{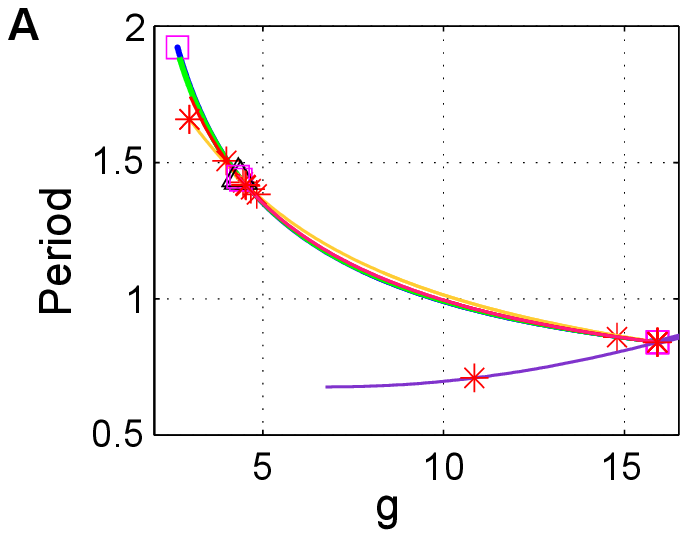}\; \; 
\includegraphics[height=5cm]{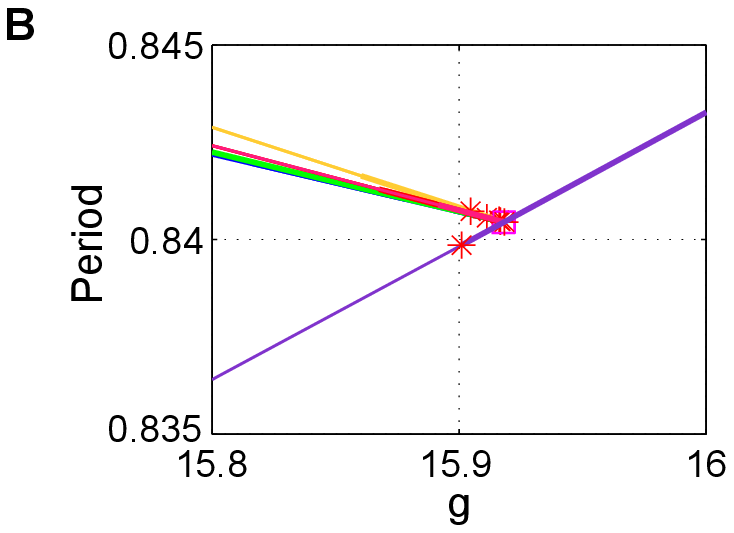}
\caption{Limit cycle branches in the  ``$x_I+1$" system. (A) Period vs coupling parameter $g$.  (B) Closeup of period vs. $g$, near the point where several limit cycle branches collide. Colors are: 5-5 (blue), 6-4 (green), 8-2 (orange), 2-3-5 (red), 2-3-5 secondary (pink), E/I (purple). Markers indicate: Neiman-Sackler (red asterisk), limit point of cycles (magenta square), branch pt. of cycles (black triangle)} \label{fig:lc_xIplus1}
\end{center}
\end{figure}

We next look at the limit cycles that arise on from Hopf bifurcations on each branch from the origin. Period decreases with $g$ (Fig. \ref{fig:lc_xIplus1}A). As in the $N=20$ system, each branch terminates where it collides with the E/I limit cycle branch that comes off the Hopf point at the origin (Fig. \ref{fig:lc_xIplus1}B).

%

\subsection{Reduced system: 3-cluster $(x_E, x_{I_1}, x_{I_2})$}  \label{sec:3cluster}
We can gain additional insight into arbitrarily large systems by reducing \eqref{eqn:Gsys} using the assumption of a three cluster grouping into populations of $n_E$, $n_{I_1}$, and $n_{I_2}$, whose activities are given by $x_E$, $x_{I_1}$ and $x_{I_2}$ respectively. The reduced system is

\begin{eqnarray*}
\dot x_E & = & -x_E + (n_E - 1) (\mu_E/\sqrt{N}) \tanh(g x_E) + n_{I_1} (\mu_I/\sqrt{N}) \tanh(g x_{I_1}) +  n_{I_2} (\mu_I/\sqrt{N}) \tanh(g x_{I_2}) \label{eqn:red_xE}\\
\dot x_{I_1} & = & -x_{I_1} + n_E  (\mu_E/\sqrt{N}) \tanh(g x_E) + (n_{I_1} - 1) (\mu_I/\sqrt{N})  \tanh(g x_I) +   n_{I_2} (\mu_I/\sqrt{N}) \tanh(g x_{I_2})\\
\dot x_{I_2} & = & -x_I + n_E  (\mu_E/\sqrt{N}) \tanh(g x_E) +   n_{I_1} (\mu_I/\sqrt{N}) \tanh(g x_{I_1})+ (n_{I_2} - 1) (\mu_I/\sqrt{N})  \tanh(g x_{I_2}). \label{eqn:red_xI_2}
\end{eqnarray*}
We can also parameterize the clustering with $\alpha$ and $\beta$ such that $n_E = \frac{\alpha}{\alpha + 1}N$, $n_{I_1} = \frac{\beta}{\beta+1}\frac{\alpha}{\alpha + 1}N$, and $n_{I_2} = \frac{1}{\beta+1}\frac{1}{\alpha + 1}N$; that is, $\beta$ gives the ratio of $n_{I_1}$ to $n_{I_2}$, just as $\alpha$ gives the ratio of $n_E$ to $n_I$. Then the equations become (also using the relationship $\mu_I = -\alpha \mu_E$):
\begin{eqnarray}
\dot x_E & = & -x_E + \left( \frac{N\alpha}{\alpha+1} - 1\right) \left( \frac{\mu_E}{\sqrt{N}}\right) \tanh(g x_E) - \frac{N\beta }{(\beta+1)(\alpha+1)} \left( \frac{\alpha \mu_E}{\sqrt{N}}\right) \tanh(g x_{I_1})   \nonumber \\
&&- \frac{N}{(\beta+1)(\alpha+1)} \left(\frac{\alpha \mu_E}{\sqrt{N}}\right) \tanh(g x_{I_2}) \label{eqn:reduced_3cluster_xE}\\
\dot x_{I_1} & = & -x_{I_1} + \frac{N\alpha}{\alpha+1}    \left( \frac{\mu_E}{\sqrt{N}}\right)  \tanh(g x_E) - \left(\frac{N\beta}{(\beta+1)(\alpha+1)}-1 \right) \left( \frac{\alpha \mu_E}{\sqrt{N}}\right)  \tanh(g x_{I_1}) \nonumber \\
&& - \frac{N}{(\beta+1)(\alpha+1)}  \left( \frac{\alpha \mu_E}{\sqrt{N}}\right)    \tanh(g x_{I_2})\\
\dot x_{I_2} & = & -x_{I_2} + \frac{N\alpha}{\alpha+1} \left( \frac{\mu_E}{\sqrt{N}}\right)    \tanh(g x_E) -   \frac{N\beta}{(\beta+1)(\alpha+1)} \left( \frac{\alpha \mu_E}{\sqrt{N}}\right)   \tanh(g x_{I_1}) \nonumber \\
&& - \left(\frac{N }{(\beta+1)(\alpha+1)}-1 \right) \left( \frac{\alpha \mu_E}{\sqrt{N}}\right)   \tanh(g x_{I_2}); \label{eqn:reduced_3cluster_v2}
\end{eqnarray}

Here, we can treat $N$, $\alpha$ and $\beta$ as continuously varying bifurcation parameters. When $N$, $\frac{N}{\alpha+1}$, and $\frac{N}{(\beta+1)(\alpha+1)}$ are all positive integers, the reduced system \eqref{eqn:reduced_3cluster_xE}-\eqref{eqn:reduced_3cluster_v2} lifts onto an $N$-cell network.

\subsection{Scaling with system size} \label{sec:largeN}
We can use this reduced system to explore how the system behaves as $N$ increases. The system in Eqn. \eqref{eqn:reduced_3cluster_xE}-\eqref{eqn:reduced_3cluster_v2} allows $N$ to be a continuously varying parameter; therefore, we can vary $N$ while holding all other parameters fixed. Notably, we will keep $\beta$ fixed; thus, we will track the behavior of a specific partition \textit{ratio} of inhibitory cells (such as 1-to-1 or 3-to-1), as $N$ increases.
When $N$, $\frac{N}{\alpha+1}$, and $\frac{N}{(\beta+1)(\alpha+1)}$ are all positive integers, the reduced system lifts onto an $N$-cell network; at each such $N$, we can track the $I_1/I_2$ fixed point branch from the known bifurcation point $g^{\ast} = \sqrt{N}/\alpha/\mu_E$. 

In Figure \ref{fig:bif_n1_2_n2_2}A, we show $(x_E, x_{I_I}, x_{I_2})$ as a function of $g$ for the partition $n_{I_1} = n_{I_2}$. Colors cycle through $N$; for each $N$, the curves from top to bottom  indicate $x_{I_1}$, $x_E$, and $x_{I_2}$. We can also locate the Hopf bifurcation along this branch, at $g^{H}$, and measure the frequency of the periodic solutions that emerge at that point. We plot these quantities in Figure \ref{fig:bif_n1_2_n2_2}B: we can see they each scale like $\sqrt{N}$.  In Figure \ref{fig:bif_n1_1_n2_4}, we show the same quantities computed for two more examples: 1-to-4 and 2-to-3 partitions respectively: the $\sqrt{N}$ scaling of both $g^{H}$ and  $\omega(g^{H})$ persists for these different partitions.

\begin{figure}[!h]
\centering
\includegraphics[height=3.6cm]{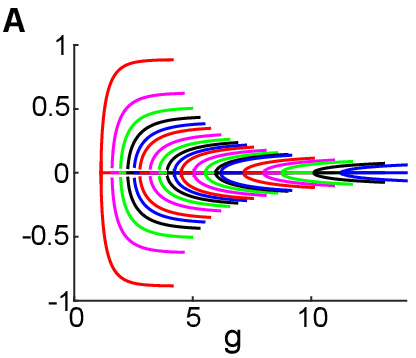} \hspace{0.5cm}
\includegraphics[height=3.6cm]{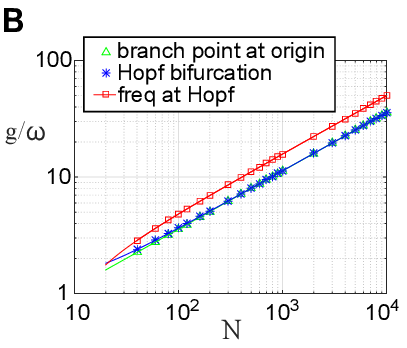}
\caption{Behavior of three-cluster solutions, for equal-size inhibitory clusters ($n_{I_2} = n_{I_1}$). (A) Activity levels on the $I_1/I_2$ solution branch. Colors cycle through $N = 10, 20, 30, 40, 50$, $60, 80, 100, 120, 140$, $160, 200, 240, 280, 300$, $400, 500, 600, 700, 800$, and $1000$ (note: $n_I = N/5$ must be a multiple of 2). (B) Bifurcation values $g^{\ast}$, $g^{H}$, and Hopf frequency $\omega(g^{H})$, as a function of $N$.} \label{fig:bif_n1_2_n2_2}
\end{figure}

\begin{figure}[!h]
\centering
\includegraphics[width=4.2cm]{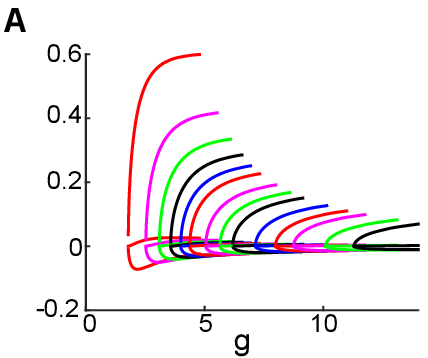} \hspace{0.5cm}
\includegraphics[width=4.2cm]{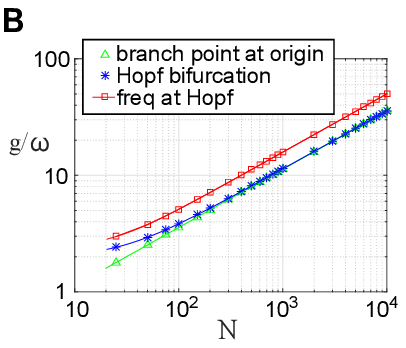}\\
\includegraphics[width=4.2cm]{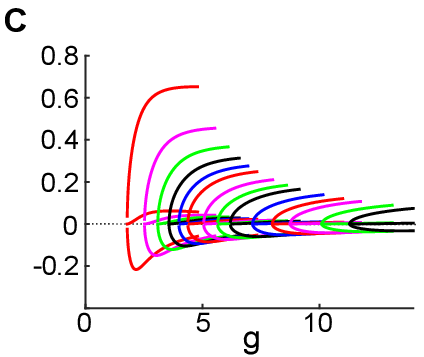} \hspace{0.5cm}
\includegraphics[width=4.2cm]{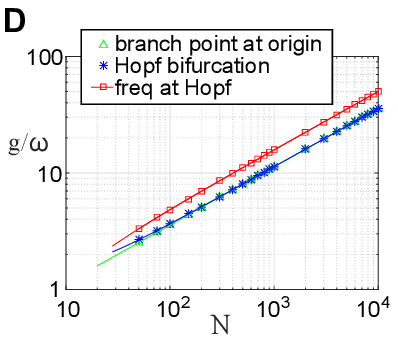}\\
\caption{Behavior of three-cluster solutions as system size $N$ increases. (A-B) clusters where inhibitory cells break into groups with size ratio 1 to 4 ($n_{I_2} = 4 n_{I_1}$).  (A) Activity levels on the $I_1/I_2$ solution branch. Colors cycle through $N = 25, 50, 75, 100$, $125, 150, 200, 250, 300, 400$, $500, 600, 700, 800$, and $1000$ ($n_I = N/5$ must be a multiple of 5).  (B) Bifurcation values $g^{\ast}$, $g^{H}$, and Hopf frequency $\omega(g^{H})$, as a function of $N$. (C-D) Solution branch in which inhibitory cells break into groups with size ratio 2 to 3 ($n_{I_2} = (3/2) n_{I_1}$). (C) Activity levels on the $I_1/I_2$ solution branch. Colors cycle through $N = 25, 50, 75, 100, 125$, $150, 200, 250, 300, 400$, $500, 600, 700, 800$, and $1000$ ($n_I = N/5$ must be a multiple of 5). (D) Bifurcation values $g^{\ast}$, $g^{H}$, and Hopf frequency $\omega(g^{H})$, as a function of $N$.} \label{fig:bif_n1_1_n2_4}
\end{figure}

The $\sqrt{N}$ scaling of $g^{\ast}$, $g^{H}$, and $\omega(g^{H})$ yields insight into the expected behavior of these solutions. First, we should expect these oscillations to become less observable, as $N$ increases; $g^{\ast}$ will eventually reach unphysical values. Second, we should expect the oscillations to become faster as $N$ increases, also eventually reaching an unphysical frequency. Thus, we expect the phenomenon we describe here, to be most relevant for small-to-medium $N$. In the next section, we will show that we can easily find an example for $N=200$; 
 the oscillation period in that example is comparable to the membrane time constant, which is a reasonable upper bound for frequency.

\newpage

\section{Demonstration of relevance to random networks ($\epsilon > 0$)}
We next demonstrate that the bifurcation structure we have described can explain low-dimensional dynamics in example random networks. We 
return to equations \eqref{eqn:Gsys}, \eqref{eqn:G_def} but now let $\epsilon > 0$. The right-hand side of Eqn. \eqref{eqn:Gsys} can be readily shown to be locally Lipschitz continuous in $\mathbb{R}^N$; thus, solutions will vary continuously as a function of parameters (such as $\epsilon$). In particular, we can expect a hyperbolic periodic orbit at $\epsilon = 0$ to persist for some range of $\epsilon \in [0, \epsilon_0)$; here, we will demonstrate this persistence numerically.

We chose parameters $\mu_E = 0.7$, $\sigma_E^2 = 0.625$ and $\sigma_I^2 = 2.5$. (For $\epsilon = 1$, the off-diagonal entries of the resulting random matrices are chosen with the same means and variances as in \cite{RA06}.) We performed a series of simulations in which we fixed $\Avec$, and computed solution trajectories for a range of $\epsilon$ in between $1$ and $0$. As $\epsilon$ decreases, the network connectivity matrix transitions from full heterogeneity (similar to \cite{RA06}), to the deterministic case studied earlier.
\begin{figure}[!h]
\centering
\includegraphics[height=2.7in]{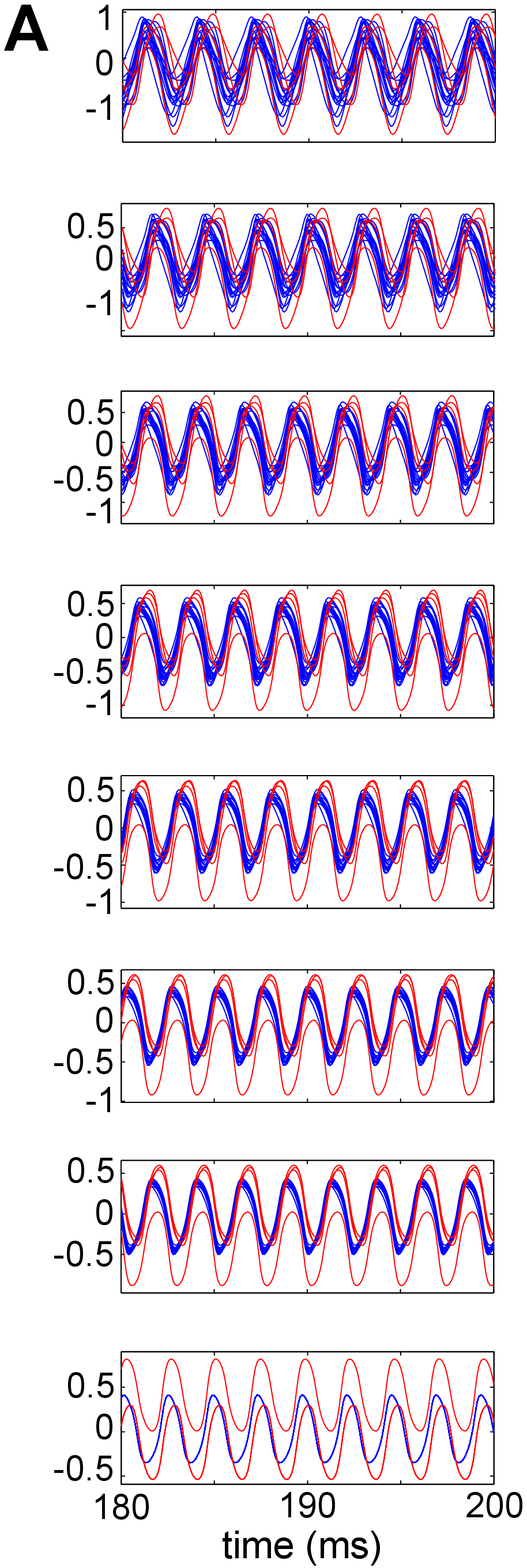}
\includegraphics[height=2.7in]{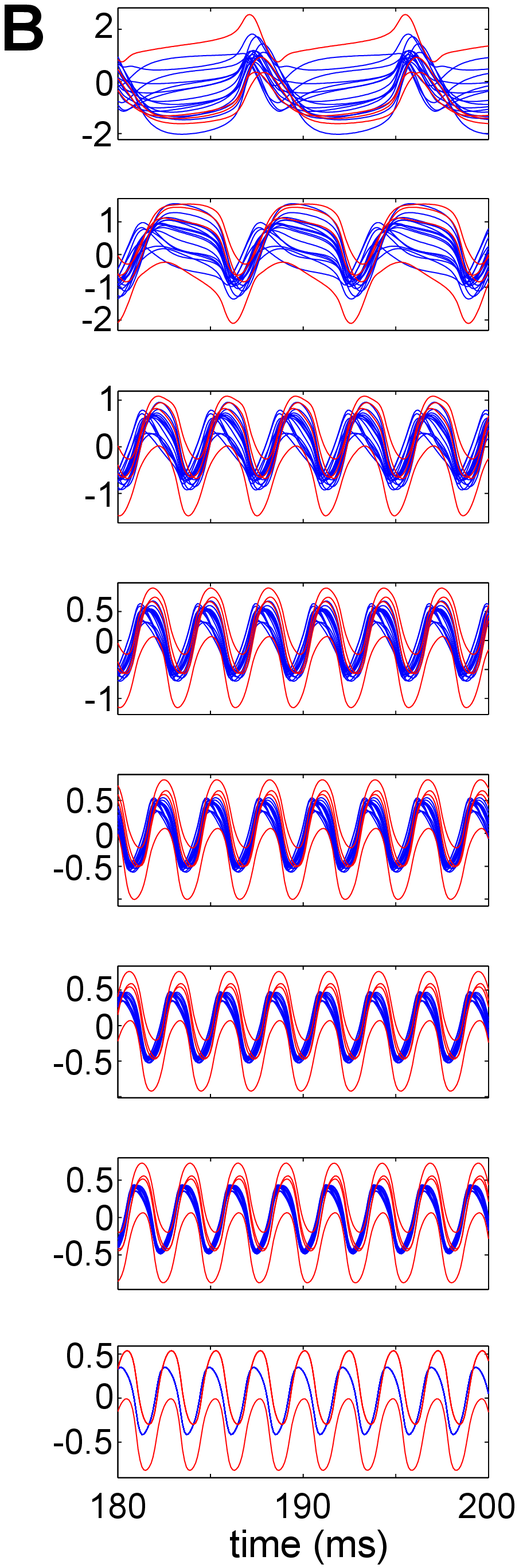}
\includegraphics[height=2.7in]{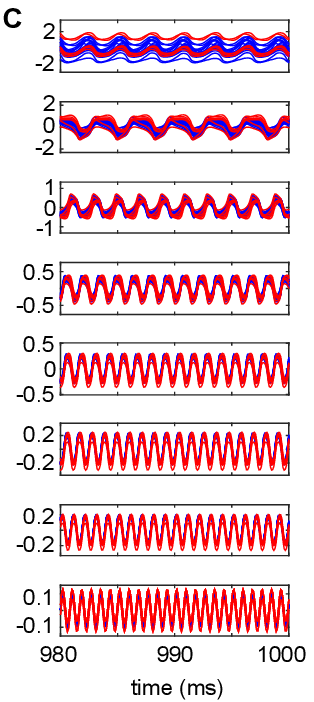}
\includegraphics[height=2.7in]{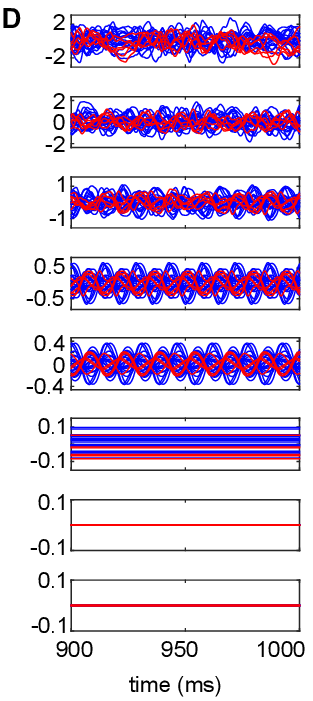}
\includegraphics[height=2.7in]{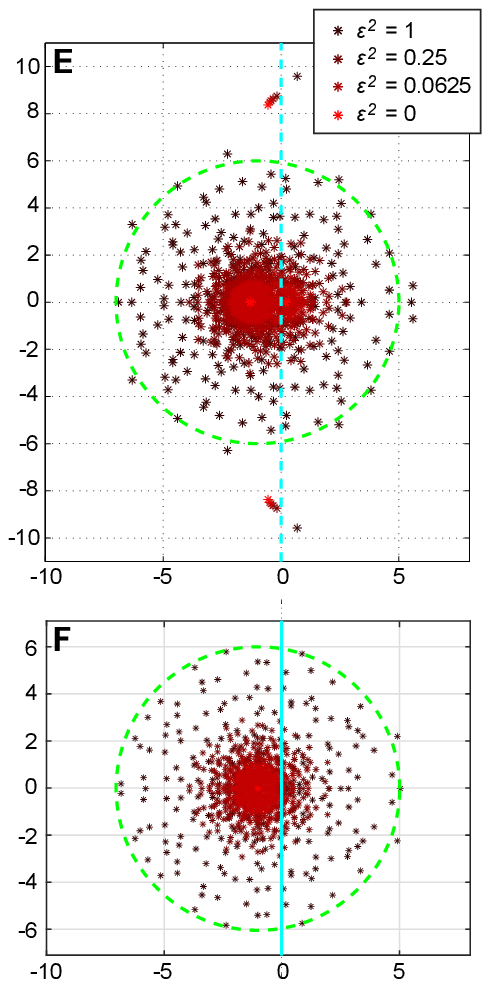}
\caption{(A-B) Solutions for two different networks of size $N = 20$: $g=3$. Here, $\sqrt{N} \Gvec \equiv \Hvec + \epsilon \Avec_{1,2}$. From top to bottom: $\epsilon^2 = 1$, $2^{-1}$, $2^{-2}$, $2^{-3}$, $2^{-4}$, $2^{-5}$, $2^{-6}$, and $0$. (C) Solutions for a network of size $N=200$. The connectivity matrix is given by $\sqrt{N} \Gvec \equiv \Hvec + \epsilon \Avec$, for a single $\Avec$. From top to bottom: $\epsilon^2 = 1$, $2^{-1}$, $2^{-2}$, $2^{-3}$, $2^{-4}$, $2^{-5}$, $2^{-6}$, and $0$) (D) Solutions for a network of size $N=200$, but where $\sqrt{N} \Gvec \equiv \epsilon \Avec$ (i.e. no mean). The random connectivity matrix $\Avec$ is the same as in panel (C).
In (A-B), the traces of $n_E$ excitatory (blue) and $n_I$ inhibitory (red) neurons are shown. In (C-D), only a subset (twenty E and six I cells) is displayed. (E,F) Eigenvalues of the connectivity matrices $\sqrt{N} \Gvec \equiv \Hvec + \epsilon \Avec$ (E) and  $\sqrt{N} \Gvec \equiv \epsilon \Avec$ (F) used in panels (C) and (D). }
\label{fig:N20_traj}
\end{figure}

In Fig. \ref{fig:N20_traj}(A, B) we show two examples of random networks of size $N=20$ and $g=3$. For $\epsilon = 0$ (bottom panel), we indeed see a three cluster solution as predicted. Consistent with our earlier results on limit cycle stability, we see the 3-1 rather than the 2-2 clustering here (in both examples here, $n_{I_1} = 1$ and $n_{I_2} = 3$). The same periodic solution persists as $\epsilon$ increases and is still recognizable at $\epsilon = 1$, illustrated in the top panel (we note that because of the odd symmetry of the governing equations, $-\xvec$ is also a valid trajectory and would appear as a reflection across the time axis). In Fig. \ref{fig:N20_traj}B, the period of the oscillations discernibly increases with $\epsilon$. 

In Fig. \ref{fig:N20_traj}C, we show an example from a larger system, with $N=200$. Here $g=6$; note that a larger coupling value is needed to exceed the bifurcation of the origin at $g^{\ast} = \sqrt{N}/\alpha/\mu_E$. A periodic trajectory is evident in all panels. As in the smaller examples, the period of oscillations increases with $\epsilon$. 

To highlight that this structure is caused by the mean connectivity $\Hvec$, we repeat the sequence of simulations, but integrating the system \textit{without} the mean matrix $\Hvec$. The results are shown in Fig. \ref{fig:N20_traj}D: here the same $\Avec$, initial condition $\xvec_0$, values of $\epsilon$, and coupling parameter $g$ were used; therefore the only difference between each panel in Fig. \ref{fig:N20_traj}D vs. its counterpart in Fig. \ref{fig:N20_traj}C is the absence of the mean connectivity matrix $\Hvec$. Without $\Hvec$, the origin is stable for $\epsilon$ sufficiently small (for $g=6$, $\epsilon^2 < 1/36$); hence the zero solutions in the bottom two panels. As $\epsilon^2$ increases beyond that value we see a fixed point, followed by periodic and then apparently chaotic solutions (for $\epsilon^2 > 2^{-2}$, a decomposition of the trajectories in terms of principal components a large number of orthogonal modes (in excess of 25) ). In addition, the characteristic timescale is much longer than in Fig. \ref{fig:N20_traj}C (note the difference in the time axes). 

Finally, we can contrast the nonlinear behavior with the predicted linear behavior by examining the spectra of the connectivity weight matrices. In Figures \ref{fig:N20_traj}E and  \ref{fig:N20_traj}F, we plot the eigenvalues of $(\Hvec + \epsilon \Avec)/\sqrt{N}$ and $\epsilon \Avec/\sqrt{N}$ respectively, for the specific networks shown in Figure \ref{fig:N20_traj}C-D, and for several values of $\epsilon$. When $\epsilon = 0$, the eigenvalues in Fig. \ref{fig:N20_traj}E cluster into two locations on the real axis, with the exception of one complex pair, as discussed in Example 1 (Fig. \ref{fig:N20_traj}E). In contrast, the eigenvalues in Fig. \ref{fig:N20_traj}F all lie at zero for $\epsilon = 0$. As $\epsilon$ increases, the eigenvalues ``fan out" from their point locations until they fill a disc of radius $g$ (here, $g=6$).  At $\epsilon = 1$, both matrices have dozens of eigenvalues in the right-half plane.


\section{Discussion}
In summary, we studied a family of balanced excitatory-inhibitory firing rate networks that satisfy Dale's Law for arbitrary network size $N$. When there is no variance in synaptic connections --- each excitatory connection has strength $\frac{\mu_E}{\sqrt{N}}$ and each inhibitory connection has strength $\frac{\mu_I}{\sqrt{N}}$ --- we find a family of deterministic solutions whose existence can be inferred from the underlying symmetry structure of the network. These solutions persist in the dynamics of networks with quenched variability --- that is, variance in the connection	 strengths ---   even when the variance is large enough that the envelope of the spectrum of the connectivity matrix approaches that of a Gaussian matrix. This offers a striking example in which linear stability theory is not useful in predicting transitions between dynamical regimes. 
Given the increasing interest in network science, and networked dynamical systems in particular, such observations concerning the impact of symmetry of connectivity can be extremely valuable for studying stability, bifurcations, and reduced-order models.

\subsection*{Role of the deterministic perturbation $\Hvec$}
Gaussian matrices are a familiar object of study in the random matrix community, where Hermitian random matrices are motivated by questions from quantum physics. 
Rajan and Abbott \cite{RA06} studied balanced rank 1 perturbations of Gaussian matrices and found that the spectrum is largely unchanged. These results have since been extended to more general low-rank perturbations \cite{Wei12,tao2013}.  More recently, Ahmadian et al. \cite{Ahmadian_etal_2015} study general deterministic perturbations in the context of neural networks. 
Similarly, recent work has studied extremal values of the spectrum of matrices with modular structure similar to that found here\cite{muir_MF_2015}. Our system is \textit{not} low-rank: in fact, the (seemingly trivial) removal of self-coupling makes the deterministic weight matrix full rank, as we see from Lemma 2.  Using the procedure developed in Ahmadian et al. \cite{Ahmadian_etal_2015}, we can numerically compute the support of spectrum for $\epsilon > 0$ (not shown): as $\epsilon$ grows, this spectral support appears to approach that predicted by a Gaussian matrix or a low-rank perturbation.

However, the more fundamental issue here is that --- except for predicting when the origin becomes unstable --- the spectrum of the full connectivity matrix is not particularly informative about nonlinear dynamics.  Instead, it is the spectrum of the \textit{deterministic} perturbation that emerges as crucial here: the location of the eigenvalues of this matrix can be used to predict the existence of a family of steady states and limit cycles with very specific structure. In the examples presented here (Fig. \ref{fig:N20_traj}), these low-dimensional solutions persist even when $\epsilon$ is large enough that the spectrum of $\Gvec$ is visually indistinguishable from the spectrum of a Gaussian matrix.

It is instructive to 
compare our findings here with the recent results of del Molino et al. \cite{delMolino_etal_PRE_2013}, who consider a balanced excitatory-inhibitory system with a similar $1/\sqrt{N}$ scaling of the mean weights. The authors find a slow, noise-induced oscillation; similar to our results here, this oscillation arises despite an unstable connectivity matrix. Where the two systems differ, is in the deterministic perturbation: del Molino et al. include self-coupling (their deterministic matrix is rank 1), which yields trivial \textit{deterministic} dynamics without a driving current (in the sense of Example 1.2); thus, they do not see the dynamics described here. Conversely, we do not enforce ``perfect balance" $\sum_j \Gvec_{ij} = 0$, which they find is a necessary condition for the slow oscillation to exist; thus we do not see the oscillations described in that paper. Thus, del Molino et al. \cite{delMolino_etal_PRE_2013} and the current work present two distinct examples of dynamics that arise in an excitatory-inhibitory system with $1/\sqrt{N}$ scaling of the mean weights, where linear stability of the connectivity matrix is not informative of the nonlinear dynamics. 


\subsection*{Relationship to other work}
The reduced system described in \S \ref{sec:3cluster} is similar to a simple version of the Wilson-Cowan equations \cite{wc72,wc73}(recently reviewed in \cite{et10,bressloff_2012}).  These equations can be interpreted in terms of coupled neural populations  and can be derived as a mean field limit from large networks. A bifurcation analysis of such a mean-field model was performed recently by Hermann and Touboul \cite{hermann_etal_2012}. Our system differs in two important ways: first, the strong coupling ($1/\sqrt{N}$) means that a factor of $\sqrt{N}$ remains in the reduced equations. Hermann and Touboul, in contrast, pick $J_{ij} \sim N\left(\frac{\bar{J}}{N}, \frac{\sigma}{\sqrt{N}}\right)$; therefore the mean connection strength ($\frac{\bar{J}}{N}$) goes to zero faster than the typical deviation from this mean ($\frac{\sigma}{\sqrt{N}}$): as $N$ becomes large, outgoing synapses are no longer single-signed, in violation of Dale's Law. Similarly, Kadmon and Sompolinsky \cite{kadmon_HS_2015} analyze random diluted networks; they show equivalence to all-to-all Gaussian networks with non-zero mean connections that scale like ($\frac{\bar{J}}{N}$). If the number of synaptic connections per population is held constant, dynamic mean field theory yields predictions for stability which are valid as $N \rightarrow \infty$. 

In contrast, the reduced system in \S \ref{sec:3cluster} does not have a nontrivial limit as $N \rightarrow \infty$, and is not necessarily a limit or a system average; rather, it simply gives reduced dynamics in a specific invariant subspace. Ultimately, every solution of the reduced system is also a perfectly accurate solution of the original system.  The parameter $\beta$ allows a single equation to capture arbitrary bisections of the inhibitory population; in principle, adding more equations would allow further branches to be captured. As another consequence of this scaling, the location of bifurcations $g^{\ast}$ and $g^H$ and the expected frequency of oscillations $\omega(g^H)$, will scale like $\sqrt{N}$; arguably, $g^{\ast}$ and $\omega(g^H)$ will reach unphysical levels, as $N$ becomes large. 

Finally, stronger mean scaling may underlie another difference from previous work; analyzing networks with $1/N$ scaling, other authors have found population-level oscillations via Hopf bifurcations in reduced equations for mean activity \cite{Ginzburg_HS_1994,Brunel_Hakim_1999}. However, in those works the oscillations are not necessarily observable at the level of individual cell activity (particularly strikingly in \cite{Brunel_Hakim_1999}); here, we have distinct cell-level oscillations as well as population-level oscillations. 

Analysis of spontaneous symmetry breaking enjoys a rich history in mathematical biology, and in mathematical neuroscience in particular. 
However, the literature we are aware of identifies symmetry-breaking in a structured network dominated by deterministic behavior. For example, 
symmetry breaking has been hypothesized to underlie the dynamics of visual hallucinations \cite{ErmCow_1979} and ambiguous visual percepts \cite{DieGol_JMN_2014}; central pattern generators which govern rhythmic behaviors of breathing, eating and swimming \cite{But+99,MarBuc_2001, Pearson1993}; and periodic head/limb motions \cite{Gol+PhysD_1998,BuoGol_JMB_2001,GolShiSte_SJAP_2007}. Most recently, Kriener et al. \cite{kriener_etal_2014} investigate a Dale's Law-conforming orientation model, and find that the dynamics are affected by a translation symmetry imposed by the regularity of the cell grid. 
In contrast, the present paper identifies an important role for symmetry in a family of networks usually thought of as dominated by randomness. 
%
%

\subsection*{Future directions}


In this paper, we have focused on analyzing the deterministic system underlying a family of Dale's Law-conforming networks. However, our ultimate interest is in the perturbation away from this system: a full characterization of the dynamics still remains to be completed.  Thus far, we have observed more variable behavior in constrained vs. Gaussian networks: at the same coupling parameter $g$, individual networks display behavior ranging from periodic (as in Fig. \ref{fig:N20_traj}C) to chaotic, suggesting that this task will be more subtle than for Gaussian networks (also see \cite{delMolino_etal_PRE_2013}). Future work will examine this in more detail.

Recent research has focused on the computational power of random networks in the (nominally unpredictable) chaotic regime. Such networks enjoy high computational power because their chaotic dynamics give them access to a rich, complicated phase space, which can be exploited during training to perform complex tasks \cite{SusAbb09,Sussillo_CON_2014}.  
It is an open question as to whether the structure of the networks examined here affects their computational performance on tasks that have been previously studied in Gaussian or other random ensembles. One preliminary study has yielded intriguing results \cite{BarreiroKS_forCNS}: we integrated networks with one of two oscillatory forcing terms $I_1(t)$ and $I_2(t)$, as described in \cite{ostojic2014}, and compared the performance of these networks on two computational tasks: encoding network-averaged firing rate with a sub-population, and discriminating the two inputs in phase space. 
As expected 
Gaussian networks performed worse than constrained networks in encoding population firing rates (similar to what was observed in the balanced networks studied by \cite{ostojic2014}).
However, this difference could not be explained solely by the dimensionality of the solution trajectories (as measured by principal component analysis): constrained networks performed better than Gaussian networks, that required an equal number of principal components to explain their solution trajectories. 
For the second task, 
we observed that for constrained networks, the trajectories associated with $I_1$ and $I_2$ appeared to cluster in distinct regions of principal component phase space; this clustering was not observed for Gaussian networks.

Finally, the ideas explored here can be applied to more general network symmetries: for example, a network with several excitatory clusters and global inhibition, or several weakly connected balanced networks \cite{litwin12}. This will both introduce realism, and allow the exploration of whether there are some universal features that are implied by the broad features of realistic neural network symmetries such as cortex-like excitatory/inhibitory ratios, spatial range specificity of excitatory vs. inhibitory connections, and so forth. We look forward to reporting on this in future work. 

This last direction, in particular, promises to provide further insight into the study of stability and bifurcations in reduced-order models.  The work in this paper
has highlighted how low-dimensional models of high-dimensional networks can 
be used to understand the underlying bifurcation structures resulting from network
connectivity.  Such studies are directly relevant to neuroscience, where input-output 
functionality of extremely high-dimensional networks have been demonstrated to 
be encoded dynamically in low-dimensional subspaces \cite{laurent_NRN_2002,Broome_etal_2006,Yu_etal_JNPhys_2009,machens_etal_JNSci_2010,Churchland_etal_Nature_2012,Shlizerman_etal_SIAP_2012,Shlizerman_etal_FCN_2014}.  We hope that studies such
as this can help highlight both methods for characterizing the 
collective behavior of networked neurons as well as the limits of traditional mathematical
methods in determining stability of such systems.  In either case, the results suggest that 
further study is needed to understand the role of connectivity in driving network
level dynamics.

\bibliographystyle{plain}
\bibliography{POD_bif_ppr_for_arXiv_R1}

\end{document}